\documentclass[twoside,leqno,twocolumn]{article}
\pdfoutput=1
\usepackage[letterpaper]{geometry}

\usepackage{ltexpprt}

\usepackage[%
  activate={true,nocompatibility},%
  final,%
  tracking=true,%
  kerning=true,%
  spacing=true,%
  stretch=10,%
  shrink=30]{microtype}
\microtypecontext{spacing=nonfrench}

\setlength\overfullrule{14pt} %

\usepackage[ruled]{algorithm}

\newcommand{\Bin}{\mathrm{Bin}}
\newcommand{\Oh}{\mathcal{O}}

\usepackage{xcolor}
\usepackage{xspace}
\usepackage{booktabs}
\usepackage{dsfont}
\usepackage{footmisc}
\usepackage{marvosym}
\usepackage{placeins}
\usepackage{rotating}
\usepackage{listings}
\usepackage{amsmath,amssymb}
\let\Pr\relax
\DeclareMathOperator*{\Pr}{\mathds{P}}
\usepackage[switch]{lineno} %
\usepackage{hyperref}
\usepackage[nocompress]{cite}
\usepackage[capitalise,noabbrev]{cleveref}
\usepackage{pgfplots}
\pgfplotsset{compat=newest}

\usepgfplotslibrary{groupplots}
\pgfplotsset{every axis/.style={scale only axis}}

\definecolor{veryLightGrey}{HTML}{F2F2F2}
\definecolor{lightGrey}{HTML}{DDDDDD}
\definecolor{colorSimdRecSplit}{HTML}{444444}
\definecolor{colorChd}{HTML}{377EB8}
\definecolor{colorRustFmph}{HTML}{A65628}
\definecolor{colorRustFmphGo}{HTML}{A65628}
\definecolor{colorSicHash}{HTML}{4DAF4A}
\definecolor{colorPthash}{HTML}{984EA3}
\definecolor{colorRecSplit}{HTML}{FF7F00}
\definecolor{colorBbhash}{HTML}{F781BF}
\definecolor{colorShockHash}{HTML}{F8BA01}

\colorlet{colorBruteForce}{colorSicHash}
\colorlet{colorRotationFitting}{colorChd}

\pgfdeclareimage[interpolate,height=1.85mm,width=1.85mm]{shockhashMark}{fig/shockhash}
\pgfdeclareplotmark{shockhash}{\pgftext[at=\pgfpointorigin]{\pgfuseimage{shockhashMark}}}

\pgfplotscreateplotcyclelist{myColorList}{%
  colorSicHash,mark=o\\%
  colorRecSplit,mark=triangle\\%
  colorSimdRecSplit,mark=pentagon\\%
  colorChd,mark=square\\%
  colorPthash,mark=x\\%
  colorBbhash,mark=diamond\\%
}
\pgfplotscreateplotcyclelist{transparentHeatmap}{%
  colorSicHash,mark=*,fill=colorSicHash\\%,opacity=.1.5\\%
}
\pgfdeclareplotmark{flippedTriangle}{%
  \pgfpathmoveto{\pgfqpointpolar{-90}{1.2\pgfplotmarksize}}%
  \pgfpathlineto{\pgfqpointpolar{30}{1.2\pgfplotmarksize}}%
  \pgfpathlineto{\pgfqpointpolar{150}{1.2\pgfplotmarksize}}%
  \pgfpathclose%
  \pgfusepath{stroke}
}

\pgfplotsset{
  mark repeat*/.style={
    scatter,
    scatter src=x,
    scatter/@pre marker code/.code={
      \pgfmathtruncatemacro\usemark{
        or(mod(\coordindex,#1)==0, (\coordindex==(\numcoords-1))
      }
      \ifnum\usemark=0
        \pgfplotsset{mark=none}
      \fi
    },
    scatter/@post marker code/.code={}
  },
  major grid style={thin,dotted},
  minor grid style={thin,dotted},
  ymajorgrids,
  yminorgrids,
  every axis/.append style={
    line width=0.7pt,
    tick style={
      line cap=round,
      thin,
      major tick length=4pt,
      minor tick length=2pt,
    },
    mark options={solid},
  },
  legend cell align=left,
  legend style={
    line width=0.7pt,
    /tikz/every even column/.append style={column sep=3mm,black},
    /tikz/every odd column/.append style={black},
    mark options={solid},
  },
  legend style={font=\small},
  title style={yshift=-2pt},
  enlarge x limits=0.04,
  every tick label/.append style={font=\footnotesize},
  every axis label/.append style={font=\small},
  every axis y label/.append style={yshift=-1ex},
  /pgf/number format/1000 sep={},
  axis lines*=left,
  xlabel near ticks,
  ylabel near ticks,
  axis lines*=left,
  label style={font=\footnotesize},
  tick label style={font=\footnotesize},
  cycle list name=myColorList,
  plotParameters/.style={
    width=35.0mm,
    height=30.0mm,
  },
  plotScaling/.style={
    width=38.0mm,
    height=30.0mm,
  },
  plotScalingConfigs/.style={
    width=35.0mm,
    height=30.0mm,
  },
  plotPareto/.style={
    width=65.0mm,
    height=40.0mm,
  },
  plotHfEvals/.style={
    width=28.0mm,
    height=30.0mm,
  },
  plotLeafMethods/.style={
    width=29.0mm,
    height=30.0mm,
  },
  plotProbabilities/.style={
    width=35.0mm,
    height=30.0mm,
    only marks,
    mark size=.75pt,
    cycle list name=transparentHeatmap,
  },
}

%auto-ignore
\usepackage{newunicodechar}
\usepackage{silence}
\WarningFilter{newunicodechar}{Redefining}

\newunicodechar{♢}{\tikz \node[inner sep=1.5,draw,diamond] {};}
\newunicodechar{☆}{\tikz \node[inner sep=1,draw,star,star point ratio=2] {};}
\newunicodechar{△}{\triangle}
\newunicodechar{⬜}{\kern 0.5pt\tikz \node[inner sep=1.7,draw,regular polygon,regular polygon sides=4] {};\kern 0.5pt}
\newunicodechar{◯}{\tikz[baseline=-3pt] \node[inner sep=1.7,draw,cloud,cloud puffs=4,cloud puff arc=190] {};}

\newunicodechar{⊥}{\bot}
\newunicodechar{•}{\item}
\newunicodechar{✓}{\checkmark}
\newunicodechar{✗}{\xmark}
\newunicodechar{…}{\dots}
\newunicodechar{≔}{\coloneqq}
\newunicodechar{⁻}{^-}
\newunicodechar{⁺}{^+}
\newunicodechar{₋}{_-}
\newunicodechar{₊}{_+}
\newunicodechar{ℓ}{\ell}
\newunicodechar{•}{\item}
\newunicodechar{…}{\dots}
\newunicodechar{≔}{\coloneqq}
\newif\ifnormopen\normopenfalse
\newunicodechar{‖}{\ifnormopen\rVert\normopenfalse\else\rVert\normopentrue\fi}
\newunicodechar{≤}{\leq}
\newunicodechar{≥}{\geq}
\newunicodechar{≰}{\nleq}
\newunicodechar{≱}{\ngeq}
\newunicodechar{⊕}{\oplus}
\newunicodechar{⊗}{\otimes}
\newunicodechar{≠}{\neq}
\newunicodechar{¬}{\neg}
\newunicodechar{≡}{\equiv}
\newunicodechar{₀}{_0}
\newunicodechar{₁}{_1}
\newunicodechar{₂}{_2}
\newunicodechar{₃}{_3}
\newunicodechar{₄}{_4}
\newunicodechar{₅}{_5}
\newunicodechar{₆}{_6}
\newunicodechar{₇}{_7}
\newunicodechar{₈}{_8}
\newunicodechar{₉}{_9}
\newunicodechar{ₚ}{_p}
\newunicodechar{ₙ}{_n}
\newunicodechar{ₐ}{_a}
\newunicodechar{ₑ}{_e}
\newunicodechar{ₕ}{_h}
\newunicodechar{ₖ}{_k}
\newunicodechar{ₗ}{_l}
\newunicodechar{ₘ}{_m}
\newunicodechar{ₛ}{_s}
\newunicodechar{ₜ}{_t}
\newunicodechar{ₓ}{_x}
\newunicodechar{⁰}{^0}
\newunicodechar{¹}{^1}
\newunicodechar{²}{^2}
\newunicodechar{³}{^3}
\newunicodechar{⁴}{^4}
\newunicodechar{⁵}{^5}
\newunicodechar{⁶}{^6}
\newunicodechar{⁷}{^7}
\newunicodechar{⁸}{^8}
\newunicodechar{⁹}{^9}
\newunicodechar{∈}{\in}
\newunicodechar{∉}{\notin}
\newunicodechar{⊂}{\subset}
\newunicodechar{⊃}{\supset}
\newunicodechar{⊆}{\subseteq}
\newunicodechar{⊇}{\supseteq}
\newunicodechar{⊄}{\nsubset}
\newunicodechar{⊅}{\nsupset}
\newunicodechar{⊈}{\nsubseteq}
\newunicodechar{⊉}{\nsupseteq}
\newunicodechar{∪}{\cup}
\newunicodechar{∩}{\cap}
\newunicodechar{∀}{\forall}
\newunicodechar{∃}{\exists}
\newunicodechar{∄}{\nexists}
\newunicodechar{∨}{\vee}
\newunicodechar{∧}{\wedge}
\newunicodechar{ℝ}{\mathbb{R}}
\newunicodechar{ℙ}{\mathbb{P}}
\newunicodechar{ℕ}{\mathbb{N}}
\newunicodechar{𝔼}{\mathbb{E}}
\newunicodechar{𝔽}{\mathbb{F}}
\newunicodechar{ℤ}{\mathbb{Z}}
\newunicodechar{⌊}{\lfloor}
\newunicodechar{⌋}{\rfloor}
\newunicodechar{⌈}{\lceil}
\newunicodechar{⌉}{\rceil}
\newunicodechar{·}{\cdot}
\newunicodechar{∘}{\circ}
\newunicodechar{×}{\times}
\newunicodechar{↑}{\uparrow}
\newunicodechar{↓}{\downarrow}
\newunicodechar{→}{\rightarrow}
\newunicodechar{←}{\leftarrow}
\newunicodechar{⇒}{\Rightarrow}
\newunicodechar{⇐}{\Leftarrow}
\newunicodechar{↔}{\leftrightarrow}
\newunicodechar{⇔}{\Leftrightarrow}
\newunicodechar{↦}{\mapsto}
\newunicodechar{∅}{\varnothing}
\newunicodechar{∞}{\infty}
\newunicodechar{≅}{\cong}
\newunicodechar{≈}{\approx}
\newunicodechar{ℓ}{\ell}
\newunicodechar{𝟙}{\mathds{1}}
\newunicodechar{𝟘}{\mathds{0}}
\newunicodechar{↪}{\hookrightarrow}

\newunicodechar{α}{\alpha}
\newunicodechar{β}{\beta}
\newunicodechar{γ}{\gamma}
\newunicodechar{Γ}{\Gamma}
\newunicodechar{δ}{\delta}
\newunicodechar{Δ}{\Delta}
\newunicodechar{ε}{\varepsilon}
\newunicodechar{ζ}{\zeta}
\newunicodechar{η}{\eta}
\newunicodechar{θ}{\theta}
\newunicodechar{Θ}{\Theta}
\newunicodechar{ι}{\iota}
\newunicodechar{κ}{\kappa}
\newunicodechar{λ}{\lambda}
\newunicodechar{Λ}{\Lambda}
\newunicodechar{μ}{\mu}
\newunicodechar{ν}{\nu}
\newunicodechar{ξ}{\xi}
\newunicodechar{Ξ}{\Xi}
\newunicodechar{π}{\pi}
\newunicodechar{Π}{\Pi}
\newunicodechar{ρ}{\rho}
\newunicodechar{σ}{\sigma}
\newunicodechar{Σ}{\Sigma}
\newunicodechar{τ}{\tau}
\newunicodechar{υ}{\upsilon}
\newunicodechar{ϒ}{\Upsilon}
\newunicodechar{φ}{\varphi}
\newunicodechar{ϕ}{\phi}
\newunicodechar{Φ}{\Phi}
\newunicodechar{χ}{\chi}
\newunicodechar{ψ}{\psi}
\newunicodechar{Ψ}{\Psi}
\newunicodechar{ω}{\omega}
\newunicodechar{Ω}{\Omega}

\newunicodechar{𝒜}{\mathcal{A}}
\newunicodechar{ℬ}{\mathcal{B}}
\newunicodechar{𝒞}{\mathcal{C}}
\newunicodechar{𝒟}{\mathcal{D}}
\newunicodechar{ℰ}{\mathcal{E}}
\newunicodechar{ℱ}{\mathcal{F}}
\newunicodechar{𝒢}{\mathcal{G}}
\newunicodechar{ℋ}{\mathcal{H}}
\newunicodechar{ℐ}{\mathcal{I}}
\newunicodechar{𝒥}{\mathcal{J}}
\newunicodechar{𝒦}{\mathcal{K}}
\newunicodechar{ℒ}{\mathcal{L}}
\newunicodechar{ℳ}{\mathcal{M}}
\newunicodechar{𝒩}{\mathcal{N}}
\newunicodechar{𝒪}{\mathcal{O}}
\newunicodechar{𝒫}{\mathcal{P}}
\newunicodechar{𝒬}{\mathcal{Q}}
\newunicodechar{ℛ}{\mathcal{R}}
\newunicodechar{𝒮}{\mathcal{S}}
\newunicodechar{𝒯}{\mathcal{T}}
\newunicodechar{𝒰}{\mathcal{U}}
\newunicodechar{𝒱}{\mathcal{V}}
\newunicodechar{𝒲}{\mathcal{W}}
\newunicodechar{𝒳}{\mathcal{X}}
\newunicodechar{𝒴}{\mathcal{Y}}
\newunicodechar{𝒵}{\mathcal{Z}}
\newunicodechar{𝒶}{\mathcal{a}}
\newunicodechar{𝒷}{\mathcal{b}}
\newunicodechar{𝒸}{\mathcal{c}}
\newunicodechar{𝒹}{\mathcal{d}}
\newunicodechar{ℯ}{\mathcal{e}}
\newunicodechar{𝒻}{\mathcal{f}}
\newunicodechar{ℊ}{\mathcal{g}}
\newunicodechar{𝒽}{\mathcal{h}}
\newunicodechar{𝒾}{\mathcal{i}}
\newunicodechar{𝒿}{\mathcal{j}}
\newunicodechar{𝓀}{\mathcal{k}}
\newunicodechar{𝓁}{\mathcal{l}}
\newunicodechar{𝓂}{\mathcal{m}}
\newunicodechar{𝓃}{\mathcal{n}}
\newunicodechar{ℴ}{\mathcal{o}}
\newunicodechar{𝓅}{\mathcal{p}}
\newunicodechar{𝓆}{\mathcal{q}}
\newunicodechar{𝓇}{\mathcal{r}}
\newunicodechar{𝓈}{\mathcal{s}}
\newunicodechar{𝓉}{\mathcal{t}}
\newunicodechar{𝓊}{\mathcal{u}}
\newunicodechar{𝓋}{\mathcal{v}}
\newunicodechar{𝓌}{\mathcal{w}}
\newunicodechar{𝓍}{\mathcal{x}}
\newunicodechar{𝓎}{\mathcal{y}}
\newunicodechar{𝓏}{\mathcal{z}}

\lstset{mathescape=true, columns=fullflexible, basicstyle=\small, escapechar=\&, keywords={,return,Function,if,then,do,until,from,repeat,}}

\crefname{listing}{Algorithm}{Algorithms}
\crefname{@theorem}{Theorem}{Theorems}
\crefname{@figure}{Figure}{Figures}
\crefname{@table}{Table}{Tables}

\newcommand{\myparagraph}[1]{\subparagraph*{#1}}

\let\oldcite\cite
\renewcommand\cite{\unskip~\oldcite}

\newcommand{\mytitle}{ShockHash:\\ Towards Optimal-Space Minimal Perfect Hashing Beyond Brute-Force}
\title{\mytitle}

\newcommand{\mythanks}[3]{\thanks{#1 \href{mailto:#2}{\includegraphics[height=8pt]{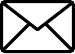}} \href{https://orcid.org/#3}{\includegraphics[height=8pt]{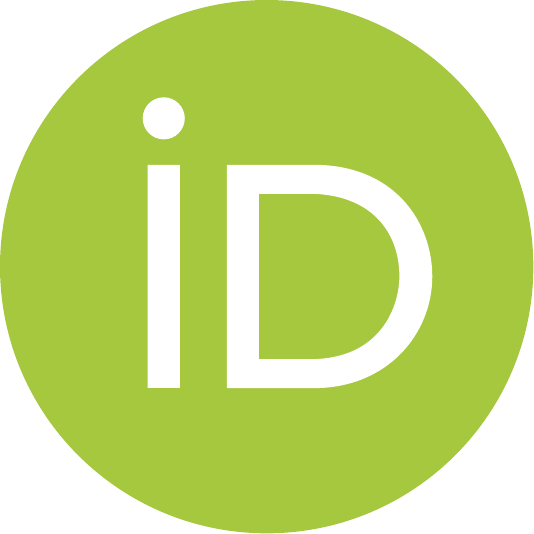}}}}
\date{}
\author{
  Hans-Peter Lehmann\mythanks{Karlsruhe Institute of Technology, Germany.}{hans-peter.lehmann@kit.edu}{0000-0002-0474-1805}
  \and
  Peter Sanders\mythanks{Karlsruhe Institute of Technology, Germany.}{sanders@kit.edu}{0000-0003-3330-9349}
  \and
  Stefan Walzer\mythanks{Karlsruhe Institute of Technology, Germany.}{stefan.walzer@kit.edu}{0000-0002-6477-0106}
}

\fancyfoot[R]{page \thepage}

\hypersetup{
  colorlinks=true,
  pdftitle={\mytitle},
  pdfauthor={Lehmann, Sanders, Walzer},
  pdfsubject={}
}

\begin{document}
\maketitle 
\def\speedupSIMD156{75}

\def\maxSpeedupNonSimdPlainRotate{195}

\def\maxSpeedupNonSimdPlain{25}

\begin{abstract}
  A minimal perfect hash function (MPHF) maps a
  set~$S$ of $n$ keys to the first $n$ integers
  without collisions.  There is a lower bound of
  $n\log_2e-\Oh(\log n)$ bits of space needed to
  represent an MPHF.  A matching upper bound
  is obtained using the \emph{brute-force} algorithm that
  tries random hash functions until stumbling on an MPHF and
  stores that function's seed.
  In expectation, $e^n\textrm{poly}(n)$
  seeds need to be tested.
  The most space-efficient previous
  algorithms for constructing MPHFs all use such a
  brute-force approach as a basic building
  block.

  In this paper, we introduce ShockHash -- {\bf
    S}mall, {\bf h}eavily {\bf o}verloaded cu{\bf
    ck}oo {\bf Hash} tables. ShockHash uses two hash functions $h₀$ and $h₁$, hoping for the existence of a function $f : S → \{0,1\}$ such that $x ↦ h_{f(x)}(x)$ is an MPHF on $S$.
  In graph terminology, ShockHash generates $n$-edge random
  graphs until stumbling on a \emph{pseudoforest}
  -- a graph where each component contains as many
  edges as nodes. Using cuckoo hashing,
  ShockHash then derives an MPHF from the
  pseudoforest in linear time.  It uses a 1-bit
  retrieval data structure to store $f$ using $n + o(n)$~bits.

  By carefully
  analyzing the probability that a random graph
  is a pseudoforest, we show that
  ShockHash needs to try only
  $(e/2)^n\textrm{poly}(n)$
  hash function seeds in expectation.
  This reduces the space for storing the seed by roughly $n$ bits (maintaining the asymptotically optimal space consumption) and speeds up construction by almost a
  factor of $2^n$ compared to brute-force.
  When using ShockHash as a building block within
  the RecSplit framework we obtain ShockHash-RS, which yields the currently most
  space efficient MPHFs, i.e.,
  competing approaches need about two orders of magnitude more
  work to achieve the same~space.
\end{abstract}

\section{Introduction}
A perfect hash function (PHF) maps a set of $N$ keys to the first $M$ integers without collisions.
If $M=N$, the hash function is called \emph{minimal} perfect (MPHF) and is a bijection between the keys and the $N$ first integers $[N]$.

Minimal perfect hashing has many applications.
For example, it can be used to implement \emph{hash tables} with guaranteed constant access time \cite{fredman1984storing}.
Storing only payload data in the hash table cells, we obtain an \emph{updatable retrieval data structure} \cite{MSSZ14}, and storing only fingerprints \cite{fan2014cuckoo,bender2018bloom}, we obtain an \emph{approximate membership data structure}.
Finally, the hashes can be used as small \emph{identifiers} of the input keys \cite{botelho2007perfect}, which are more efficient to deal with than large and complex keys.

There is a lower bound of about $N\log_2 e\approx
1.44N$ bits
needed to represent an MPHF. There is also a
theoretical construction matching this bound that
runs in linear time and allows constant query time
\cite{HagTho01}.  However, this construction does
not work for realistic $N$, so that (minimal)
perfect hashing remains an interesting topic for
algorithm engineering. A long sequence of previous
work has developed a range of practical approaches
with different space-time tradeoffs.

Many approaches first construct an outer hash function $g : S → \{1,…,k\}$ that partitions the input set $S$ into small subsets $S₁, S₂, …, S_k$ of sizes $s₁ ≈ s₂ ≈ … ≈ s_k$ and then constructs a perfect hash function $h_i : S_i → \{1,…, s_i\}$ on each subset. Given $s_i$, or better yet the prefix sums $p_i = s₀ + \cdots s_{i-1}$, an MPHF on $S$ is given as $x ↦ p_{g(x)} + h_{g(x)}(x)$.

On the one hand, there are holistic methods where such a partitioning step is not essential (though still possibly useful). These are (so far) all
a constant factor away from the space lower bound
(e.g. \cite{CSLSR11,MSSZ14,lehmann2023sichash,beling2023fingerprinting}).  One of the most space-efficient
approaches among these is \emph{SicHash} \cite{lehmann2023sichash}
that
maps $N$ keys to $N(1+ε)$ unique table
entries using a generalization of cuckoo hashing
\cite{pagh2004cuckoo,fotakis2005space}.  The
choice of hash function for each key is then stored in a
retrieval data structure and a ranking data structure turns the constructed PHF into an MPHF. SicHash is not
space-optimal partly because the cuckoo table tends to admit many valid placements of its keys, meaning a single input set is redundantly handled by many distinct states of the PHF data structure.

On the other hand, there are methods that use
brute-force trial-and-error of hash functions on
subsets of size $n$ \cite{esposito2020recsplit,bez2022high,fox1992faster,belazzougui2009hash,pibiri2021pthash}, which takes roughly $e^n$ trials.
Hence, an aggressive partitioning step is required to obtain an acceptable overall running time.
The currently most space efficient approach,
\emph{RecSplit} \cite{esposito2020recsplit}, is of this kind and
recursively splits the input set into very small ($n\approx 16$)
\emph{leaf} subsets.
Surprisingly, when
using aggressive parallelization, this enables
higher construction throughput than the best
holistic methods even when being fairly far away
from the space lower bound \cite{bez2022high}.

In this paper, we introduce \emph{ShockHash} --
{\bf S}mall, {\bf h}eavily {\bf o}verloaded cu{\bf
  ck}oo {\bf Hash} tables, which can be seen as an extreme
  version of SicHash where we use two hash functions for each key
and retry construction until we can completely fill the cuckoo hash table.
That way, we achieve an MPHF without an intermediate non-minimal PHF.
In graph terminology, ShockHash repeatedly generates an $n$-edge random graph where each key corresponds to one edge, connecting the candidate positions of the key.
The table can be filled if and only if the graph is a \emph{pseudoforest}
  -- a graph where each component contains as many
  edges as nodes.
The ShockHash idea is
straightforward in principle but interesting in
two respects. First, we can \emph{prove} that when
using basic cuckoo hashing with two choices (and
thus 1-bit retrieval)
there is only an insignificant amount of
redundancy.
Therefore, ShockHash approaches the
information theoretic lower bound for large
$n$ and has running time $(e/2)^n·\textrm{poly}(n)$ (nearly a factor $2^n$ faster than brute-force).

We obtain ShockHash-RS by using ShockHash instead of brute-force as a base case within the RecSplit framework.
This brings algorithmic sophistication back into business: Though there is a small penalty in query time due to the additional access to a retrieval data structure, ShockHash-RS is now the most efficient method to construct nearly space-optimal MPHFs.
At the most space efficient configurations (leaf size $n \approx 40$), ShockHash-RS construction is about two orders of magnitude faster than tuned RecSplit \cite{bez2022high} when achieving the same space efficiency and using the same architecture.
An important step in this
harmonization of theory and practice is the
observation that only an exponentially small
fraction of the hash functions tried by ShockHash require the construction of a cuckoo hash table. The other
cases can be covered with a simple bit-parallel
filter that checks whether all entries of the
cuckoo table are hit by some key.  This
removes much of the time overhead which made brute-force seemingly superior.

In \cref{s:prelim}, we introduce basic concepts like cuckoo hashing and retrieval data structures.
In \cref{s:related}, we discuss related MPHF constructions from the literature.
Then we introduce ShockHash in \cref{s:shockhash}, and prove its space usage and construction time in \cref{s:analysis}.
\Cref{s:shockhashInRecSplit} explains ShockHash-RS, i.e.\ how we integrate ShockHash into the RecSplit framework. \Cref{s:refinements} gives further refinements including an outline on how to parallelize construction and how to integrate ShockHash into other perfect hashing approaches.
Finally, we compare an implementation of ShockHash-RS with competitors from the literature in \cref{s:experiments}.
We give a conclusion and an outlook to future work in \cref{s:conclusion}.

\myparagraph{Summary of Contributions}
\begin{itemize}\itemsep-0.8mm
\vspace{-2mm}
\item Building block for space-efficient construction of MPHFs as an alternative to brute-force, with a speedup of $2^n$
\item Theoretical analysis explaining the experiments
\item Efficient integration into the RecSplit framework
\item Careful experimental comparison with the
  state-of-the-art. Competitors need two orders of magnitude
  more work to achieve similar space efficiency
\item ShockHash can be integrated into further frameworks
\end{itemize}

\section{Preliminaries}\label{s:prelim}
In the following section, we explain basic ingredients of ShockHash.
This also includes the two perfect hash function constructions SicHash \cite{lehmann2023sichash} and RecSplit \cite{esposito2020recsplit} that ShockHash-RS is based on.

\myparagraph{Cuckoo Hashing.}  Cuckoo Hashing
\cite{pagh2004cuckoo} is a well known approach to
handle collisions in hash tables.  Each object
gets two candidate cells via two hash functions.
A query operation looks at the two cells
and searches for the object.  If an insertion
operation tries to insert an object into a cell
that is already full, the object already stored in
the cell is taken out and recursively inserted
using its other candidate position.  Cuckoo
hashing can be extended to use more than two hash
functions \cite{fotakis2005space}, or cells with
more than one object in them
\cite{dietzfelbinger2007balanced}.  In this paper,
we are only interested in the basic version with
two hash functions and one object per cell.

\label{s:loadThresholds}
The load threshold  of a cuckoo hash table \cite{L:A_New_Approach:2012,FKP:The_Multiple:2011,fountoulakis2012sharp} is the number of keys that can be inserted before insertion likely fails.
For cuckoo hashing with two candidate cells, the load threshold is $c=0.5$.

\myparagraph{Pseudoforests.}
\label{s:pseudotrees}
Cuckoo hashing can be modeled as a random graph $G$, where each node represents a table cell and each edge represents one object, connecting its two candidate cells.
It is easy to see that a cuckoo hash table can be constructed successfully if and only if the edges of $G$ can be oriented such that the indegree of each node is $\leq 1$.
In the following, we call this a 1-orientation.
A 1-orientation exists if and only if $G$ is a \emph{pseudoforest}, i.e.\ every connected component of $G$ is a pseudotree. A pseudotree is either a tree or a cycle with trees branching from it.
A way to check whether a graph is a pseudoforest is to check whether each component contains at most as many edges as nodes.

\myparagraph{Retrieval Data
  Structures.}\label{s:retrieval} For a given set
$S$ of $N$ keys, a retrieval data structure (or
\emph{static function} data structure) stores a function
$S\rightarrow\{0, 1\}^r$ that maps each key to a
specific $r$-bit value.  Because it may %
return arbitrary values for keys not in $S$, it is
possible to represent the function without
representing $S$ itself.  Representing a retrieval
data structure needs at least $rN$ bits of space
and there are practical data structures that need
$rN+o(rN)$ bits allowing linear construction time
and constant query time.  In particular, for $r = 1$,
\emph{Bumped Ribbon Retrieval (BuRR)} \cite{dillinger2022burr}
reduces function evaluation to \texttt{XOR}ing a
hash function value with a segment of a
precomputed table and reporting the parity of the
result. This table can be determined by solving a
nearly diagonal system of linear equations (a
``ribbon''). In practice, BuRR has a space
overhead of about $1$\%.

\myparagraph{SicHash.}
{\bf S}mall {\bf
  i}rregular {\bf c}uckoo tables for perfect {\bf
  Hash}ing \cite{lehmann2023sichash} constructs
perfect hash functions through cuckoo hashing.
It constructs a cuckoo hash
table and then uses a retrieval data structure to
store which of the hash function choices was
finally used for each key.
SicHash's main innovation is using a
careful mix of \mbox{1--3} bit retrieval data
structures. It achieves a favorable
space-performance tradeoff when being allowed 2--3
bits of space per key.  It cannot go below
this because using only 1-bit retrieval seems to
lead far from minimality while using 2 or more
bits for retrieval allows redundant choices that
cannot achieve space-optimality.  SicHash achieves
a rather limited gain in space efficiency by
\emph{overloading} the table
beyond the load thresholds %
and trying multiple hash functions.
This mainly exploits the variance in the number of keys that can fit.
SicHash leaves the success probability of constructing overloaded tables as an open question.
ShockHash drives the idea of overloading to its extreme and gives a formal analysis for this case.

\myparagraph{RecSplit.}\label{s:recsplit}
RecSplit \cite{esposito2020recsplit} is a minimal perfect hash function that is mainly focused on space efficiency.
First, all keys are hashed to buckets of constant expected size~$b$.
A bucket's set of keys is recursively split into different subsets by searching for a \emph{splitting} hash function that partitions the keys into smaller sets.
This is repeated recursively in a tree-like structure until subproblems (leaves) with constant size $n$ are left (in Ref. \cite{esposito2020recsplit}, the leaf size is called $\ell$).
Within the leaves, RecSplit then performs brute-force search for a minimal perfect hash function (also called \emph{bijection}).
The tree structure is based only on the size of the current bucket.
This makes it possible to store only the seed values for the hash functions without storing structural information.
Apart from encoding overheads for the seeds, this makes RecSplit information theoretically optimal within a bucket.
The number of child nodes (\emph{fanout}) in the two lowest levels is selected such that the amount of brute-force work is balanced between splittings and bijections.

There also is a parallel implementation using multi-threading and SIMD instructions or the GPU \cite{bez2022high}.
The paper also proposes a new technique for searching for bijections called \emph{rotation fitting}.
Instead of just applying hash functions on the keys in a leaf directly, rotation fitting splits up the keys into two sets using a 1-bit hash function.
It then hashes each of the two sets individually, forming two words where the bits indicate which hash values are occupied.
Then it tries to find a way to cyclically rotate the second word, such that the empty positions left by the first set are filled by the positions of the second set.
The paper shows that each rotation essentially gives a new chance for a bijection, so it is a way to quickly evaluate additional hash function seeds.

\section{More Related Work}\label{s:related}
In addition to RecSplit and ShockHash, which we describe in the preliminaries, there is a range of other MPHFs.

\myparagraph{Hash-and-Displace.}
(M)PHFs with Hash-and-Displace \cite{fox1992faster,belazzougui2009hash,pibiri2021pthash} allow fast queries and asymptotically optimal space consumption.
Each key $x$ is first hashed to a small bucket $b(x)$ of keys.
For each bucket $b$, an index $i(b)$
of a hash function $f_{i(b)}$ is stored such $x ↦ f_{i(b(x))}(x)$ is an injective function.
For a particular bucket, this index is searched in a brute-force way.
To accelerate the search, buckets are first sorted by their size.
Further acceleration can be achieved by using heterogeneous expected bucket sizes.
PTHash \cite{pibiri2021pthash} is the currently best implementation of this approach.

\myparagraph{Fingerprinting.}
Perfect hashing through fingerprinting \cite{CSLSR11,MSSZ14} hashes the $N$ keys to $\gamma N$ positions using an ordinary hash function, where $\gamma$ is a tuning parameter.
The most space efficient choice $\gamma=1$ leads to a space consumption of $e$ (not $\log_2 e$) bits per key.
A bit vector of length $\gamma N$ indicates positions to which exactly one key was mapped.
Keys that caused collisions are handled recursively in another layer of the same data structure.
At query time, when a key is the only one mapping to its location, a rank operation on the bit vector gives the MPHF value.
Publicly available implementations include BBHash \cite{limasset2017fast} and the significantly faster FMPH \cite{beling2023fingerprinting}.
\mbox{FMPHGO} \cite{beling2023fingerprinting} combines the idea with a few brute-force tries to select a hash function that causes fewer collisions.

\begin{figure*}
  \centering
  \includegraphics[scale=1.0]{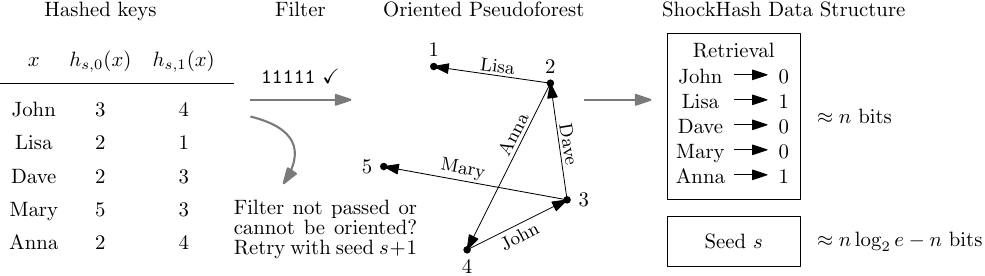}
  \caption{\label{fig:illustration} Illustration of the ShockHash construction. $s$ is a seed value where the resulting graph is a pseudotree. During construction, many seeds need to be tried.
      }
\end{figure*}

\myparagraph{Table Lookup.}
A tempting way to
replace expensive brute-force search is
precomputation of solutions with subsequent table
lookup -- a standard technique used in many
compressed data structures. For a rough idea,
suppose for a subproblem with $n$ keys, we first
map them injectively to a range of size $U' \in \Omega(n^2)$ using an intermediate hash function
(less would lead to collisions -- birthday
paradox). Then, using a lookup table of size
$2^{U'}$, we can find precomputed perfect hash
functions in constant time. However, polynomial running time limits
the subproblem size to $n \in \Oh(\sqrt{\log N})$, where $N$ is the size of the overall input set.
Putting in realistic values, one gets subproblem
size much smaller than what can be easily handled
even with plain RecSplit.  Nevertheless, Hagerup
and Tholey \cite{HagTho01} develop this approach
to a comprehensive theoretical solution of the perfect hashing
problem yielding linear construction time,
constant query time, and space $1+o(1)$ times the
lower bound. However, this method is not even
well-defined for $N<2^{150}$ \cite{botelho2013practical}.
A variant of RecSplit with rotation fitting \cite{bez2022high}
can use lookup tables of size $2^{n}$ to find feasible rotations in constant time.
Unfortunately, this turns out to be slower than trying all rotations directly.

\section{ShockHash}\label{s:shockhash}

We now introduce the main idea of this paper, ShockHash.
ShockHash is briefly mentioned in the extended version of Ref. \cite{bez2022high}, but without any analysis and with an implementation that cannot compete with brute-force.
The asymptotic load threshold of a binary cuckoo hash table is $c=0.5$ (see \cref{s:loadThresholds}), so the success probability of constructing a table with $n$ cells and more than $n/2$ keys tends to zero.
ShockHash overloads a cuckoo hash table far beyond its asymptotic load threshold -- it inserts $n$ keys into a binary cuckoo hash table of size $n$.
As we will see in \cref{lem:constructionTries}, the construction succeeds after $(e/2)^{n} \textrm{poly}(n)$ tries in expectation.
We then record the successful seed
\begin{equation*}\label{eq:ShockHash}
  s=\min\{s ∈ ℕ \mid ∃f ∈ \{0,1\}^S\!: x \!↦\! h_{s,f(x)}(x) \text{ is MPHF}\}
\end{equation*}
and a successful choice $f$ between the two candidate positions of each key.
The seed needs $0.44n+o(n)$ bits in expectation using Golomb-Rice codes \cite{golomb1966run, rice1979some}.
The choices are stored in a $1$-bit retrieval data structure, requiring $n+o(n)$ bits.
This means that the majority of the MPHF description is not stored in the seed, like with the brute-force construction, but in the retrieval data structure.
A query for key $x$ retrieves $f(x)$ from the retrieval data structure and
returns $h_{s,f(x)}(x)$.
\Cref{fig:illustration} gives an illustration of the ShockHash construction.

The beauty of ShockHash is that it
can check $2^n$ different possible hash functions
(determined by the $2^n$ different functions represented by the
retrieval data structure) in linear time.  Refer
to \cref{lem:constructionTries} for details.  This
enables significantly faster construction than brute-force while
still consuming the same amount of space.

\label{s:sccFilter}
As discussed in \cref{s:pseudotrees},
a seed leads to a
successful cuckoo hash table construction if and only if
the corresponding random (multi)graph forms a pseudoforest.  Each component
of size $c$ is a pseudotree if and only if it
contains no more than $c$ edges.  This can be
checked in linear time using connected components
algorithms, or in close to linear time using an
incremental construction of an ordinary cuckoo
hash table.
Nevertheless, the check quickly becomes a bottleneck in practice.

\myparagraph{Filter by Bit Mask.}\label{s:bitmaskFilter}
To address this bottleneck, we therefore use a filter to quickly reject most seeds, namely seeds for which some table cell is not a candidate position of any of the keys.
If there is such a cell, we already know that cuckoo hashing cannot succeed.
Otherwise, cuckoo hashing might succeed.
The filter can be implemented using simple shift and comparison operations.
Also, the filter can use registers, in contrast to the more complex full construction.
It is one of the main ingredients for making ShockHash practical and is easily proven to be very effective:

\begin{lemma}
  \label{lem:filterProbability}
  The probability for a seed to pass the filter,
  i.e.\ for every table cell to be hit by at least
  one key, is at most $(1-e^{-2}+o(1))^n ≈ 0.864^n$.
\end{lemma}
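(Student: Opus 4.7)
The plan is to reduce the event to a standard balls-in-bins problem and to apply a Poissonization argument. Modeling $h_0$ and $h_1$ as independent fully random functions, each key $x \in S$ contributes the pair $(h_0(x), h_1(x))$, which is uniform on $[n]^2$ and independent across keys. So the ``some cell is uncovered'' question is precisely the question of whether any bin is empty when $2n$ balls are thrown independently and uniformly into $n$ bins. Letting $X_i$ denote the number of balls in bin $i$, the quantity of interest is $p_n := \Pr(\forall i \in [n]: X_i \geq 1)$, and we want to show $p_n \leq (1 - e^{-2} + o(1))^n$.

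The key step is to compare with the Poissonized model. Let $Y_1, \ldots, Y_n$ be i.i.d.\ $\mathrm{Poisson}(2)$ random variables. A standard fact is that the conditional distribution of $(Y_1, \ldots, Y_n)$ given $\sum_i Y_i = 2n$ equals the distribution of $(X_1, \ldots, X_n)$. Therefore
\[
p_n = \Pr(\forall i: Y_i \geq 1 \mid \textstyle\sum_i Y_i = 2n) \leq \frac{\Pr(\forall i: Y_i \geq 1)}{\Pr(\sum_i Y_i = 2n)}.
\]
In the numerator, the $Y_i$ are independent, each with $\Pr(Y_i \geq 1) = 1 - e^{-2}$, so the numerator is exactly $(1 - e^{-2})^n$. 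The denominator is $\Pr(\mathrm{Poisson}(2n) = 2n)$, which by Stirling's formula equals $\Theta(1/\sqrt{n})$.

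Combining these two estimates yields $p_n \leq (1 - e^{-2})^n \cdot O(\sqrt{n})$. A polynomial-in-$n$ factor can be absorbed into the base by writing $O(\sqrt{n}) \leq (1 + O(\log n / n))^n$, which gives $p_n \leq (1 - e^{-2} + o(1))^n$, as required.

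There is no real obstacle here: the only subtlety is the Poissonization trick, and once that is in place everything reduces to bookkeeping. An alternative is to write $p_n$ as an inclusion–exclusion sum $\sum_{k=0}^{n} (-1)^k \binom{n}{k}(1 - k/n)^{2n}$ and to extract the exponential rate directly (noting that $(1 - k/n)^{2n} \approx e^{-2k}$), but this is messier than the Poissonization route and does not give a cleaner bound.
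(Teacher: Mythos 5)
Your proof is correct, but it uses a genuinely different technique from the paper's proof of this lemma. The paper observes that the multinomial counts $(X_1,\dots,X_n)$ are negatively associated (citing Joag-Dev and Proschan), which directly yields the product bound $\Pr(\forall i: X_i\ge 1)\le\prod_i\Pr(X_i\ge 1)=\bigl(1-(1-1/n)^{2n}\bigr)^n=(1-e^{-2}+o(1))^n$, with no Stirling estimate and no polynomial loss factor to absorb. You instead use Poissonization plus the crude inequality $\Pr(A\mid B)\le\Pr(A)/\Pr(B)$, paying a $\Theta(\sqrt n)$ factor from the local-limit denominator and then folding it into the $o(1)$ in the base. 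Your route has the advantage of being fully self-contained: it needs only independence of the Poisson variables and Stirling, whereas the paper's route delegates the combinatorial heart to a negative-association theorem. The paper's route is slicker and gives the bound without the detour through a polynomial correction. Incidentally, your Poissonization is a lighter version of exactly what the paper does in its appendix (Section~\ref{s:FilterStrength}): there a tilted Poisson conditioned to be positive is used to pin down the true rate $b\approx 0.836<1-e^{-2}$, whereas your untilted Poisson only gives the looser $(1-e^{-2})^n\cdot O(\sqrt n)$ upper bound. So you rediscovered, in weakened form, the appendix argument rather than the main-text one.
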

\begin{proof}
  Let $X_i$ denote the number of times that cell
  $i ∈ [n]$ is hit. Then $(X₁,…,Xₙ)$ follows a
  multinomial distribution. The variables
  $X₁,…,Xₙ$ are \emph{negatively associated} in
  the sense introduced in
  \cite{JDP:NegativeAssociation:1983} and satisfy
  \[
    ℙ(∀i∈[n]:X_i ≥ 1) ≤ \prod_{i = 1}^n ℙ(X_i ≥
    1),
  \]
  the intuition being that since the sum $X₁+…+Xₙ
  = 2n$ is fixed, the events $\{X_i ≥ 1\}$ for $i
  ∈ [n]$ are less likely to co-occur compared to
  corresponding independent events.  Since $X_i
  \sim \Bin(2n,\frac 1n)$ for all $i ∈ [n]$ we
  have
  \[
    ℙ(X_i ≥ 1) = 1- (1-\tfrac 1n)^{2n} = 1-e^{-2}
    + o(1) ≈ 0.864
  \]
  and the claim follows.
  \hfill
\end{proof}
A more careful analysis reveals that the probability to pass the filter is around $b^n$ where $b = 2e^λ / (λe^2) ≈ 0.836$ and where $λ ≈ 1.597$ is the soution to $2 = λ/(1-e^{-λ})$. We give a proof in \cref{s:FilterStrength}.

\myparagraph{Rotation Fitting.}\label{s:rotationFitting}
A technique to speed up brute-force search for perfect hash functions is \emph{rotation fitting} \cite{bez2022high} (see \cref{s:recsplit}).
The same idea can be used in ShockHash to accelerate the search.
We distribute the keys to two sets using an ordinary 1-bit hash function.
We then determine the bit mask of output values that are hit in both of the sets.
Like in the bit mask filter, only if the logical \texttt{OR} of both masks has all bits set, it is worth testing the seed more closely.
If we now cyclically rotate one of the bit masks and try again, we basically get a new chance of all output values being hit, without having to hash each key again.
This corresponds to an addition modulo $n$ to all keys of the second set.
We conjecture that -- as in Ref. \cite{bez2022high} -- this reduces the number of hash function evaluations by a factor of $n$, while the space overhead tends to zero.

\section{Analysis}\label{s:analysis}
In this section, we analyze the space usage and construction time of ShockHash.
The main challenge is to lower bound the probability that a hash function seed enables successful construction of the heavily overloaded cuckoo hash table. In the following we assume that a seed is given. We suppress it in notation.

\def\orient{\mathrm{ori}}
\def\PF{\mathrm{PF}}
We are hence given two hash functions $h₀,h₁: S → [n]$ and wish to know the probability that there exists a function $f: S → \{0,1\}$ such that $x ↦ h_{f(x)}(x)$ is bijective, an event we abbreviate with $\orient(f)$. It will be useful to consider the graph
\[
G = ([n], \{\{h₀(x),h₁(x)\} \mid x ∈ S\}).
\]
While similar to an Erdős-Renyi random graph, $G$ may have self-loops\footnote{In our implementation, we avoid self-loops for better performance, but allowing self-loops is easier to analyze.} and multi-edges. Our model matches Model A in \cite{frieze2016introduction}.
There is a one-to-one correspondence between functions $f$ with $\orient(f)$ and $1$-orientations of $G$, i.e.\ ways of directing $G$ such that each node has indegree at most $1$.\footnote{This assumes that there are two ways of directing a self-loop.}

In the following, we give two different proofs for the main result.
\Cref{lem:simpleProof} is a very simple combinatorial argument that we discovered during the review period.
It shows that the probability for $G$ to be 1-orientable is at least $(e/2)^{-n}\sqrt{\pi/(2n)}$.
\Cref{lem:constructionTries} then shows that the probability is at least $(e/2)^{-n} \pi/e$.
Therefore, the simple argument is only a factor of $\Oh(\sqrt{n})$ less tight than the original, much more complex proof.

\begin{theorem}
  \label{lem:simpleProof}
  Let $G$ be a multigraph with $n$ nodes and $n$ edges which are uniformly selected from $[n]^2$.
  Then the probability that $G$ is a pseudotree is at least $(e/2)^{-n}\sqrt{\pi/(2n)}$.
\end{theorem}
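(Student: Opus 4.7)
The plan is a direct combinatorial count: I will lower-bound the number of ``good'' sequences $((a_i,b_i))_{i=1}^n \in ([n]^2)^n$ (those whose underlying multigraph is a pseudotree) by counting only a convenient subclass — pseudotrees whose unique cycle is a single self-loop, which I will call \emph{rho-shapes}. The probability is then (number of good sequences)$/n^{2n}$, and restricting to rho-shapes already turns out to be tight enough to reach the stated bound.

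A rho-shape on $[n]$ is specified by four independent choices: the loop vertex $v \in [n]$ ($n$ options); a labeled tree on $[n]$ rooted at $v$ ($n^{n-2}$ options, by Cayley's formula); a bijection from the edge labels $\{1,\dots,n\}$ to the $n$ edges of the rho-shape ($n!$ options); and, for each of the $n-1$ non-loop edges, a choice of orientation as an ordered pair from $[n]^2$ ($2^{n-1}$ options in total, since the self-loop admits only one ordered representative $(v,v)$). Each such quadruple yields a distinct sequence, and its underlying multigraph is connected (via the spanning tree) and has exactly one cycle (the loop), hence is a pseudotree. Multiplying gives $n \cdot n^{n-2} \cdot n! \cdot 2^{n-1} = 2^{n-1} n!\, n^{n-1}$ pseudotree sequences.

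Dividing by the sample-space size $n^{2n}$,
\[
\Pr[G \text{ is a pseudotree}] \;\geq\; \frac{2^{n-1}\, n!\, n^{n-1}}{n^{2n}} \;=\; \frac{2^{n-1} n!}{n^{n+1}}.
\]
Substituting Stirling's lower bound $n! \geq \sqrt{2\pi n}\,(n/e)^n$ and simplifying yields exactly $(e/2)^{-n}\sqrt{\pi/(2n)}$, as claimed.

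The main obstacle is only book-keeping: one must verify that different (loop vertex, rooted tree, label bijection, edge-orientation) tuples really produce distinct sequences in $([n]^2)^n$, and that each one gives a connected unicyclic multigraph under the $[n]^2$ sampling convention where a self-loop is represented by the single ordered pair $(v,v)$. Both checks are immediate. No generating-function asymptotics for $C_n$ (the number of connected functional digraphs on $[n]$) are required; restricting attention to the rho-shape subclass — equivalently, to connected functional digraphs whose cycle has length $1$ — already suffices, because this subclass contributes $n^{n-1}$ structures, exactly the amount needed to cancel the Stirling factor.
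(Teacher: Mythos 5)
Your proposal is correct and is essentially the paper's own argument: both lower-bound the pseudotree probability by counting a tractable subclass via Cayley's formula (the paper counts sequences whose first $n-1$ edges form a spanning tree with the last pair arbitrary, you count spanning-tree-plus-self-loop configurations with the loop in any position), and both subclasses have exactly the same size, giving the identical intermediate bound $2^{n-1}n!/n^{n+1}$ before the same Stirling step. So apart from the cosmetic choice of subclass, the decomposition, key lemma, and final estimate coincide with the paper's proof.
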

\begin{proof}
  The probability space underlying $G$ is that of sampling $2n$ vertices (with replacement) and creating an edge from the samples $2i-1$ and $2i$ for each $i ∈ [n]$.
  For $G$ to be a pseudotree it is sufficient (though not necessary) that the first $n-1$ created edges form a tree.
  There are $n^{n-2}$ labelled $n$-node trees (Cayley's Formula \cite{cayley1878theorem}).
  Since the ordering of the edges and the order of the two samples forming an edge does not matter, each of the trees can be generated in $2^{n-1}(n-1)!$ ways.
  The last two samples can be anything, giving us $n^2$ choices.
  By applying Stirling's approximation, namely
  \begin{align*}
    n! ∈ [\big(\tfrac{n}{e}\big)^{n}\sqrt{2 \pi n}\cdot e^{1/(12n+1)}, \big(\tfrac{n}{e}\big)^{n}\sqrt{2 \pi n} \cdot e^{1/(12n)}],
  \end{align*}
  we can show that the total probability to draw a pseudotree is at least
  \begin{align*}
    \frac{n^{n-2}2^{n-1}(n-1)!n^2}{n^{2n}} \geq \left(\frac{e}{2}\right)^{-n}\sqrt{\pi/(2n)}.
  \end{align*}
  \vspace{-2.0\baselineskip}\\\phantom{foo}\hfill
\end{proof}

For the tighter proof, we write $\PF(G)$ for the event that $G$ is a pseudoforest. As pointed out in \cref{s:pseudotrees}:
\begin{equation}
  \PF(G) ⇔ ∃f: \orient(f).
  \label{eq:pseudoforests-orientable}
\end{equation}
In our case with $n$ nodes and $n$ edges, $\PF(G)$ implies that $G$ is a \emph{maximal} pseudoforest, where every component is a pseudotree and not a tree. Note that a pseudotree that is not a tree admits precisely two $1$-orientations because the unique cycle can be directed in two ways and all other edges must be directed away from the cycle. A useful observation is therefore
\begin{equation}
  \PF(G) ⇒ \# \{f \mid \orient(f) \} = 2^{c(G)}
  \label{eq:orientations-cycles}
\end{equation}
where $c(G)$ is the number of connected components of $G$.

The basic idea of our proof is as follows.
The probability that a random function is minimal perfect is $e^{-n}\textrm{poly}(n)$ (see \cref{lem:bruteForce}).
Each of the $2^n$ functions $f: S → \{0,1\}$ has that chance of satisfying $\orient(f)$ and yielding an MPHF.
However, simply multiplying $e^{-n}\textrm{poly}(n)$ by $2^n$ does not necessarily yield an approximation for the probability that such an $f$ exists.
The key point here is that the $2^n$ functions determined by the $2^n$ different options for $f$ are correlated.
If there are some graphs with many 1-orientations, we may find many MPHFs at once and the probability that at least one $1$-orientation exists is reduced.
A key step will therefore be to show that we tend to find only a few MPHFs at once. This amounts to analyzing the distribution of the number of components in random maximal pseudoforests.
The main proof in \cref{lem:constructionTries} then formally bounds the probability that a random graph can be 1-oriented, juggling different probability spaces.

In the following it will be useful to reveal $G$ in two steps. First the degree of each node is revealed, by randomly distributing $2n$ \emph{stubs} (or half-edges) among the $n$ nodes. This yields a so-called \emph{configuration model} \cite{newman2010networks} from which the edges are then obtained by randomly matching the stubs. The following Lemma should clarify what exactly it is we need.

\begin{lemma}
  \label{lem:probability-spaces}
  Let $x_1,…,x_{2n} ∈ [n]$ be independent and uniformly random. The graphs $G₁,G₂,G₃$ defined in the following have the same distribution as $G$.
  \begin{enumerate}
    • $G₁ = ([n],\{\{x_{2i-1},x_{2i}\} \mid i ∈ [n]\})$.
    • $G₂ = ([n],\{\{x_i,x_j\} | \{i,j\} ∈ M\})$ where $M$ is a uniformly random perfect matching of $[2n]$, i.e.\ a partition of $[2n]$ into $n$ sets of size $2$.
    • $G₃$ is defined like $G₂$, except that $M$ is obtained in a sequence of $n$ rounds. In each round an unmatched number $i ∈ [2n]$ is chosen \emph{arbitrarily} and matched to a distinct unmatched number $j$, chosen uniformly at random. The choice of $i$ may depend on $x₁,…,x_{2n}$ and on the set of numbers matched previously.
  \end{enumerate}
\end{lemma}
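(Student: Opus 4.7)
My plan is to prove the three distributional identities in turn, chaining them via transitivity: $G_1 \stackrel{d}{=} G$ is essentially a relabelling, $G_1 \stackrel{d}{=} G_2$ is a short exchangeability argument, and the substantive step is $G_2 \stackrel{d}{=} G_3$, which requires showing that the adaptively built matching $M$ in $G_3$ is still uniform over perfect matchings of $[2n]$ and independent of $x_1, \ldots, x_{2n}$.

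For $G_1 \stackrel{d}{=} G$, I would simply observe that $(h_0(x), h_1(x))_{x \in S}$ is a family of $2n$ iid uniform samples from $[n]$, so renaming them as $x_1, \ldots, x_{2n}$ and reading off edges $\{x_{2i-1}, x_{2i}\}$ reproduces the definition of $G$. For $G_1 \stackrel{d}{=} G_2$, I would fix any perfect matching $M_0$ of $[2n]$ and note that the $n$ unordered pairs $\{x_i, x_j\}$ for $\{i, j\} \in M_0$ involve disjoint index sets; consequently these pairs are mutually independent, each uniformly distributed over unordered pairs from $[n]$. This law is invariant under the choice of $M_0$, so averaging over a uniform $M$ that is independent of $x$ (as in $G_2$) yields the same distribution as picking $M_0$ to be the canonical matching $\{\{1,2\}, \{3,4\}, \ldots, \{2n-1, 2n\}\}$ (as in $G_1$).

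The interesting case is $G_2 \stackrel{d}{=} G_3$. My plan is to fix an arbitrary outcome $x = (x_1, \ldots, x_{2n})$ and an arbitrary target matching $M_0$ and compute $\Pr(M = M_0 \mid x)$ directly from the adaptive process. Conditional on $x$, the adversary's choice $i_k$ at each round $k$ becomes a deterministic function of the history of matched pairs, and there is a unique sequence of draws $j_1, \ldots, j_n$ that realises $M_0$ (at round $k$, $j_k$ must be the $M_0$-partner of $i_k$, which has not been matched since it belongs to a different pair of $M_0$ than any already handled). Since $j_k$ is uniform over the $2n - 2k + 1$ unmatched indices other than $i_k$, multiplying the per-round probabilities gives
\[
  \Pr(M = M_0 \mid x) = \prod_{k=1}^{n} \frac{1}{2n - 2k + 1} = \frac{1}{(2n-1)!!},
\]
which depends on neither $x$ nor $M_0$. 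As $[2n]$ has exactly $(2n-1)!!$ perfect matchings, $M$ given $x$ is therefore uniformly distributed, so $M$ is independent of $x$ and uniform overall -- exactly the setup of $G_2$.

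The main obstacle is keeping a clear head in the $G_3$ step: one might intuitively worry that adaptivity lets the adversary bias the multiset of edges (for instance by greedily pairing indices with equal $x$-values to force self-loops). The resolution is that although the adversary fully controls \emph{which} round produces any given edge, the uniform randomness in each $j_k$ forces the product of round-by-round probabilities to collapse to the matching-independent value $1/(2n-1)!!$, so her extra information about $x$ cannot propagate into the marginal law of $M$. This ``deferred decisions'' flexibility is precisely what will make $G_3$ the representation of choice in \cref{lem:constructionTries}, where it is convenient to expose the graph one edge at a time while choosing the next stub to match based on the partial structure already discovered.
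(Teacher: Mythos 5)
Your proposal is correct and follows the same overall decomposition as the paper: $G \stackrel{d}{=} G_1$ by relabelling, $G_1 \stackrel{d}{=} G_2$ by conditioning on $M$, and the substantive work in $G_2 \stackrel{d}{=} G_3$. For the first two steps your arguments coincide with the paper's. For $G_3$, the paper only sketches the argument (``consider any arbitrary adversarial strategy and use induction''), whereas you give a complete, direct computation: condition on $x$, observe that for any target $M_0$ there is a unique admissible sequence of random draws $j_1,\dots,j_n$, and multiply the per-round probabilities to get $\Pr(M = M_0 \mid x) = 1/(2n-1)!!$, independent of both $x$ and $M_0$. This is essentially the induction unrolled into a closed-form product, and it is arguably cleaner and more self-contained than what the paper gestures at. One small point worth making explicit (you touch on it parenthetically): the reason $i_k$'s $M_0$-partner is guaranteed to be unmatched at round $k$, given the process has agreed with $M_0$ so far, is that the set of unmatched indices is then a union of $M_0$-pairs, so it is closed under taking $M_0$-partners. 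With that noted, the argument is airtight.
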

The reason for considering these alternative probability spaces for $G$ is that they permit conditioning on partial information about $G$ (such as its degree sequence implicit in $x₁,…,x_{2n}$) but retaining a clean probability space for the remaining randomness.
\begin{proof}
  Compared to $G$, the definition of $G₁$ simply collects the $2n$ relevant hash values in a single list.%
\footnote{Here, we assume that $h₀$ and $h₁$ are fully random hash functions and given for free, which is common in previous papers (Simple Uniform Hashing Assumption) \cite{dietzfelbinger1990new,pagh2007linear,pagh2008uniform,dietzfelbinger2009applications}.}
  Concerning $G₂$, imagine that $M$ is revealed first. Conditioned on $M$, $G₂$ is composed of $n$ uniformly random edges like $G₁$.
  Concerning $G₃$, the key observation is that $M$ is a uniformly random matching even if the number to be randomly matched in every round is chosen by an adversary. A formal proof could consider any arbitrary adversarial strategy and use induction.
  \hfill
\end{proof}

\begin{lemma}
  \label{lem:numberOfOrientations}
  Let $G_n$ be the random graph sampled from the configuration model with $n$ nodes of degree $2$, i.e.\ the $2$-regular graph obtained by randomly joining $2n$ stubs that are evenly distributed among $n$ nodes. Then the number $c(G_n)$ of components of $G$ satisfies
  $𝔼(2^{c(G_n)}) ≤ e·\sqrt{2n}$.
\end{lemma}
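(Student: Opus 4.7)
The plan is to compute $\mathbb{E}(2^{c(G_n)})$ exactly via a bijection and then finish with an elementary binomial-coefficient bound. Since every vertex of $G_n$ has degree $2$, every connected component is a cycle (possibly a self-loop or a double edge), and by the paper's convention each cycle admits exactly two orientations. Therefore $2^{c(G_n)}$ equals the number of ways to orient the cycles of $G_n$, so
\[
\mathbb{E}(2^{c(G_n)}) = \frac{1}{(2n-1)!!}\sum_M 2^{c(G(M))},
\]
where $M$ ranges over perfect matchings of the $2n$ stubs. The numerator is the number of pairs $(M,\omega)$ of a matching together with an orientation $\omega$ of its cycles.

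The key step is to construct a bijection between such pairs and the set $\{0,1\}^n \times \mathrm{Sym}(n)$. Given $(\sigma,\pi)$, use $\sigma(v)\in\{0,1\}$ to designate one of the two stubs of $v$ as the \emph{out-stub} (and the other as the \emph{in-stub}) for every $v\in[n]$; then define $M$ by matching the out-stub of $v$ to the in-stub of $\pi(v)$, and orient the resulting cycles in the direction of $\pi$. The inverse map reads off $\pi(v)$ as the successor of $v$ in its oriented cycle, and $\sigma(v)$ according to which stub of $v$ carries the outgoing edge. This gives
\[
\sum_M 2^{c(G(M))} = 2^n \cdot n! \qquad\text{and hence}\qquad \mathbb{E}(2^{c(G_n)}) = \frac{(2^n n!)^2}{(2n)!} = \frac{4^n}{\binom{2n}{n}}.
\]

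To finish, a short induction on $n$ shows $\binom{2n}{n} \geq 4^n/(2\sqrt{n})$, with the inductive step reducing to the trivial inequality $(2n+1)^2 \geq 4n(n+1)$. This yields $\mathbb{E}(2^{c(G_n)}) \leq 2\sqrt{n}$, which is stronger than the claimed bound $e\sqrt{2n}$ since $2 \leq e\sqrt{2}$.

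The delicate part is verifying that the bijection behaves correctly at self-loops and double edges, where the convention of two orientations per cycle must be reconciled with the $2^n$ stub-labeling choices. A self-loop at $v$ contributes $2$ pairs $(M,\omega)$ (one matching, two orientations) and corresponds to $\pi(v)=v$ together with the two choices of $\sigma(v)$; a double edge between $u$ and $v$ contributes $4$ pairs $(M,\omega)$ (two matchings times two orientations each) and corresponds to the single permutation $\pi=(u\,v)$ together with $2^2$ choices of $(\sigma(u),\sigma(v))$. Once these cases are checked, the general case of a $k$-cycle with $k\geq 3$ is routine.
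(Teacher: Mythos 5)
Your proof is correct, and it takes a genuinely different route from the paper's. The paper derives the recurrence $d_n = (1 + \tfrac{1}{2n-1})\,d_{n-1}$ by conditioning on whether a chosen stub at a fixed vertex is matched to the other stub at that same vertex (forming a self-loop), and then bounds the resulting product $\prod_{i=1}^n \tfrac{2i}{2i-1}$ analytically via $\ln(1+x)\le x$ and the harmonic sum. Your bijection between pairs (matching, cycle orientation) and $\{0,1\}^n\times\mathrm{Sym}(n)$ instead gives the exact closed form $\mathbb{E}(2^{c(G_n)}) = 4^n/\binom{2n}{n}$ directly; note that the paper's product in fact telescopes to this same quantity, though the paper never writes it this way. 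Your approach buys an exact formula, a cleaner combinatorial explanation (the "orientation" interpretation makes the computation transparent), and a slightly tighter constant ($2\sqrt{n}$ rather than $e\sqrt{2n}$), at the cost of having to verify the bijection carefully at self-loops and double edges — which you do. The paper's recurrence approach is perhaps more mechanical and requires less case analysis, but obscures the exact value. Both are sound; either could serve as the proof, but the numerical constant is immaterial to the downstream use in Theorem~\ref{lem:constructionTries}, where only the $\Theta(\sqrt{n})$ growth matters.
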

We remark that a similar proof shows that $\mathds{E}(c(G_n)) \in \Oh(\log n)$.
Note also the similarity to the locker puzzle, which analyzes the length of the largest cycle in a random permutation \cite{stanley2011enumerative}.
\begin{proof}
  We will find a recurrence for $d_n := 𝔼(2^{c(G_n)})$.
  Consider an arbitrary node $v$ of $G_n$ and one of the stubs at $v$.
  This stub forms an edge with some other stub.
  We have $n-1$ other nodes, each with $2$ stubs, and we have the second stub at $v$. Each of these $2n-1$ stubs is matched with $v$ with equal probability. Therefore, the probability that $v$ has a self-loop is $\frac{1}{2n-1}$.

  (1) Conditioned on $v$ having a self-loop, we have found an isolated node. The distribution of the remaining graph is that of $G_{n-1}$ and the conditional expectation of $2^{c(G)}$ is therefore $𝔼(2^{1+c(G_{n-1})}) = 2d_{n-1}$.
  
  (2) Now condition on the formed edge connecting $v$ to $w ≠ v$. 
  We can now merge the nodes to a single one without affecting the number of components. The merged node inherits two unused stubs, one from $v$ and one from $w$.
  The distribution of the remaining graph is that of $G_{n-1}$, so in this case, the conditional expectation of $2^{c(G)}$ is simply $d_{n-1}$.

  These two cases lead us to the following recurrence:
  \begin{align*}
    d_{n}=\tfrac{1}{2n-1}2d_{n-1}+\big(1-\tfrac{1}{2n-1}\big)d_{n-1}=\big(1+\tfrac{1}{2n-1}\big)d_{n-1}.
  \end{align*}
  With the base case $d_0=1$, we can solve the recurrence and bound its value as follows, using that $\ln(1+x) \leq x$ for $x \geq 0$ as well as $H_n := \sum_{i = 1}^n \frac 1i ≤ 1 + \ln n$:
  \begin{align*}
    d_n
    &= \prod_{i=1}^n \left( 1 + \frac{1}{2i-1} \right)
    = \textrm{exp}\left(\sum_{i=1}^n \ln\left(1 + \frac{1}{2i-1}\right)\right) \\
    &\leq \textrm{exp}\left(\sum_{i=1}^n \frac{1}{2i-1}\right)
    \leq \textrm{exp}\left((1+H_{2n-1})/2\right)\\
    &\leq \textrm{exp}\left(1 + \ln(2n)/2\right)
    \leq e·\sqrt{2n}.
  \end{align*}
  \vspace{-2.7\baselineskip}\\\phantom{foo}\hfill
\end{proof}

Let us now re-state the known result proven in
RecSplit \cite{esposito2020recsplit}, which bounds
the probability that a random function is minimal
perfect.  We will use this later when proving
\cref{lem:constructionTries}.

\begin{lemma}[see \cite{esposito2020recsplit}]
  \label{lem:bruteForce}
  A random function $h: S \rightarrow [n]$ on a set $S$ of $n$ keys, is minimal perfect
  (i.e.\ is a bijection)
  with probability $e^{-n}\sqrt{2 \pi n}\cdot (1+o(1))$.
\end{lemma}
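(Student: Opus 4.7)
The plan is to reduce the claim to a standard counting identity and then apply Stirling's formula. Under the simple uniform hashing assumption used throughout the paper (cf.\ the footnote to \cref{lem:probability-spaces}), $h$ is uniformly distributed over the $n^n$ functions from $S$ to $[n]$. Since $|S| = n = |[n]|$, the function $h$ is a bijection (equivalently, injective, equivalently surjective) exactly when no two keys collide, and the number of such bijections is $n!$. Hence the probability that $h$ is minimal perfect equals $n!/n^n$.

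Next I would substitute the refined form of Stirling's approximation,
\[
  n! = \sqrt{2\pi n}\left(\frac{n}{e}\right)^n (1 + o(1)),
\]
into this ratio. The factor $(n/e)^n$ divided by $n^n$ leaves $e^{-n}$, so that
\[
  \frac{n!}{n^n} = \sqrt{2\pi n}\cdot e^{-n} \cdot (1 + o(1)),
\]
which is precisely the claim. If one wanted an explicit error term rather than $o(1)$, one could use the two-sided bound $\sqrt{2\pi n}(n/e)^n e^{1/(12n+1)} \le n! \le \sqrt{2\pi n}(n/e)^n e^{1/(12n)}$ already invoked in \cref{lem:simpleProof}, which gives a $(1 + \Theta(1/n))$ relative error.

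There is essentially no obstacle here: the argument is a one-line counting identity followed by a single invocation of Stirling's formula, which is why the lemma can simply be cited from \cite{esposito2020recsplit}. The only observation worth making is how this bound sets the benchmark against which the ShockHash gain is measured. The proof of \cref{lem:constructionTries} will effectively amplify this $e^{-n}$ success probability by the $2^n$ candidate functions $f$, and then discount that naive union bound by the typical number of $1$-orientations (controlled through the component count via \cref{lem:numberOfOrientations}), yielding the promised $(e/2)^{n}\mathrm{poly}(n)$ expected number of trials.
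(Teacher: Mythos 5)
Your proof is correct and matches the paper's argument exactly: count $n!$ bijections among $n^n$ functions to get probability $n!/n^n$, then apply Stirling's formula. The extra remarks about error terms and how this benchmark feeds into \cref{lem:constructionTries} are accurate but not part of the cited proof.
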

\begin{proof}
  Given $S$, there are
  $n^n$ possible functions from $S$ to~$[n]$
  and $n!$ of them are bijective.
  Therefore, the probability that a randomly selected function is minimal perfect is $n!/n^n$.
  The claim is obtained by applying Stirling's approximation.
  \hfill
\end{proof}

Now let us continue with the main theoretical contribution of this paper.

\begin{theorem}
  \label{lem:constructionTries}
  Let $h₀,h₁ : S → [n]$ be uniformly random functions. The probability that there exists $f : S → \{0,1\}$ such that $x ↦ h_{f(x)}(x)$ is bijective is at least $(e/2)^{-n} e^{-1} \sqrt{\pi}$.
\end{theorem}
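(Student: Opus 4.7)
The plan is to apply the elementary first-moment ratio $\mathbb{P}[X \geq 1] = \mathbb{E}[X]/\mathbb{E}[X \mid X \geq 1]$ to the integer random variable $X := \#\{f : \orient(f)\}$. By \cref{eq:pseudoforests-orientable,eq:orientations-cycles}, $X$ equals $2^{c(G)} \mathds{1}[\PF(G)]$, so $\mathbb{P}[X \geq 1]$ is exactly the target $\mathbb{P}[\PF(G)]$ while $\mathbb{E}[X \mid X \geq 1] = \mathbb{E}[2^{c(G)} \mid \PF(G)]$. The task therefore splits cleanly into computing a ``numerator'' $\mathbb{E}[X]$ and upper-bounding the conditional expectation $\mathbb{E}[2^{c(G)} \mid \PF(G)]$.

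For the numerator I would write $X = \sum_{f \in \{0,1\}^S} \mathds{1}[\orient(f)]$ and observe that for each fixed $f$ the composition $x \mapsto h_{f(x)}(x)$ is a uniformly random function $S \to [n]$, so \cref{lem:bruteForce} gives $\mathbb{P}[\orient(f)] = n!/n^n$ and linearity yields $\mathbb{E}[X] = 2^n \cdot n!/n^n$. For the denominator I would show $\mathbb{E}[2^{c(G)} \mid \PF(G)] \leq e\sqrt{2n}$ by reducing to \cref{lem:numberOfOrientations}. The structural fact to exploit is that a pseudoforest on $n$ vertices with $n$ edges has every component a pseudotree, so $c(G)$ equals the number of cycles, which equals the number of components of the $2$-core. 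Working in the adversarial sequential pairing model $G_3$ of \cref{lem:probability-spaces}, I would match the ``tree'' stubs first and defer the cycle stubs; conditionally on the $2$-core living on $k$ vertices, the residual random matching of cycle stubs is exactly the $2$-regular configuration model $G_k$, so \cref{lem:numberOfOrientations} gives $\mathbb{E}[2^{c(G_k)}] \leq e\sqrt{2k} \leq e\sqrt{2n}$, and averaging over $k$ preserves the bound.

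Combining the two estimates, $\mathbb{P}[\PF(G)] \geq (2^n n!/n^n)/(e\sqrt{2n})$, and Stirling's lower bound $n! \geq (n/e)^n \sqrt{2\pi n}$ collapses this to $(e/2)^{-n} e^{-1}\sqrt{\pi}$ as required. The hard part will be making the $2$-core conditioning fully rigorous: one must verify that the residual pairing inside the $2$-core is genuinely uniform and independent of the attached tree structure, while gracefully handling the multigraph subtleties (self-loops and multi-edges) that \cref{lem:probability-spaces} is designed to accommodate. If a clean decoupling proves elusive, a safer backup is to condition instead on the full degree sequence and track cycle-creating edges step by step during the $G_3$ process, adapting the recurrence underlying the proof of \cref{lem:numberOfOrientations} to that broader setting at the cost of slightly lossier constants.
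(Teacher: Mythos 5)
Your proposal is correct and matches the paper's proof essentially step for step: the same first-moment ratio $\mathbb{P}[X>0]=\mathbb{E}[X]/\mathbb{E}[X\mid X>0]$ with $X=\#\{f : \orient(f)\}$, the same numerator $2^n\, n!/n^n$ via \cref{lem:bruteForce}, and the same reduction of the denominator to $\mathbb{E}[2^{c(G)}\mid\PF(G)]$ via \cref{lem:numberOfOrientations} and the $G_3$ model. The "clean decoupling" you were hoping for is exactly what the paper supplies: after revealing the degree sequence, it runs a peeling process in the adversarial pairing model that repeatedly matches a stub at a current degree-1 node, leaving a residual graph that either has an isolated vertex (so $\neg\PF$) or is $2$-regular on $n'$ vertices and distributed precisely as $G_{n'}$ from \cref{lem:numberOfOrientations}; this is the rigorous version of "match tree stubs first, defer cycle stubs," with your 2-core on $k$ vertices playing the role of the peeled residual graph.
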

\def\perf{\mathrm{perf}}
\begin{proof}
  Recall our shorthand $\orient(f)$ for the event that $x ↦ h_{f(x)}(x)$ is bijective.

  For any non-negative random variable $X$ we have
  \[
    𝔼(X) = ℙ(X > 0)·𝔼(X \mid X > 0).
  \]
  Setting $X = \#\{f \mid \orient(f)\}$ and rearranging this for $ℙ(X > 0)$ yields
  \[
    ℙ(∃f: \orient(f)) = \frac{𝔼(\#\{f \mid \orient(f)\})}{𝔼(\#\{f \mid \orient(f)\} \mid ∃f: \orient(f)).}
  \]
  We will consider the numerator and denominator in turn.

  \myparagraph{Numerator: Expectation.}
  Linearity of expectation (holding even for dependent variables) yields
  \begin{gather*}
    𝔼(\#\{f \mid \orient(f)\}) = \sum_{f} ℙ(\orient(f))\\
    = \sum_{f} ℙ(x ↦ h_{f(x)}(x) \text { is bijective on $S$}).
  \end{gather*}
  For any fixed $f$, the function $x ↦ h_{f(x)}(x)$ assigns independent random numbers to each $x ∈ S$, i.e.\ is a random function as considered in \cref{lem:bruteForce} and hence bijective with probability $e^{-n}\sqrt{2 \pi n}·(1+o(1))$. We therefore get
  \begin{align}
    \label{eq:numOfOrientations}
    \mathds{E}\left(\# \{f \mid \orient(f)\} \right) ≥ 2^n \cdot e^{-n}\sqrt{2 \pi n}.
  \end{align}

  \myparagraph{Denominator: Conditional Expectation.}
  Using observations (\ref{eq:pseudoforests-orientable}) and (\ref{eq:orientations-cycles}) we can shift our attention onto the graph $G$:
  \[
    𝔼(\#\{f \mid \orient(f)\} \mid ∃f: \orient(f))
    = 𝔼(2^{c(G)} \mid \PF(G)).
  \]
  By virtue of \cref{lem:probability-spaces} we can moreover move to a configuration model à la $G₃$. We first reveal the locations $x₁,…,x_{2n}$ of the $2n$ stubs (hence the degree sequence of $G₃$) and then consider the following \emph{peeling process} \cite{walzer2021peeling,Molloy05:Cores-in-random-hypergraphs,Luczak:A-simple-solution} that reveals edges of $G₃$ and simplifies $G₃$ in a step-by-step fashion.

  As long as there exists a node $v$ with only one stub, firstly, match it to a random stub to form a corresponding edge $\{v,w\}$ (consuming the two stubs) and, secondly, remove the node $v$ and the newly formed edge $\{v,w\}$. These removals do not affect the number of components of the resulting graph (since $v$ was connected to $w$), nor whether the resulting graph is a pseudoforest (since the component of $w$ lost one node and one edge).

  Let $n'$ be the number of nodes that remain after peeling and let $G'$ be the graph obtained by matching the remaining stubs. As discussed we have $\PF(G₃) ⇔ \PF(G')$ and $c(G') = c(G₃)$. Since the average degree of $G₃$ is $2$ and since we removed one node and one edge in every round, the average degree of $G'$ is also $2$. There are two cases.
  \vspace{-1mm}
  \begin{description}\itemsep-1.5mm\topsep-2mm 
    \item[Case 1:] Some node of $G'$ has degree $0$. Then $¬\PF(G')$ because some component of $G'$ must have average degree $> 2$.
    \item[Case 2:] No node of $G'$ has degree $0$. Since we ran the peeling process, there is also no node of $G'$ with degree $1$. Hence, every node of $G'$ has degree $2$. This makes $G'$ a collection of cycles. In particular $\PF(G')$ holds. Moreover, the generation of $G'$ is precisely the situation discussed in \cref{lem:numberOfOrientations}.
  \end{description}
  \vspace{-1mm}
  Because the two cases imply opposite results on $G'$ being a pseudoforest, we know that $\PF(G')$ holds if \emph{and only if} we arrive in Case 2.
  While we have no understanding of the distribution of $n'$, we can nevertheless compute:
  \begin{align}
    \label{eq:conditionalPseudotree}
    𝔼(2^{c(G)} &\mid \PF(G)) = 𝔼(2^{c(G₃)} \mid \PF(G₃))\notag\\
    &= 𝔼(2^{c(G')} \mid \PF(G')) = 𝔼(2^{c(G')} \mid \text{Case 2})\notag\\
    &≤ \max_{1 ≤ i ≤ n} 𝔼(2^{c(G')} \mid \text{Case 2 with $n' = i$})\notag\\
    &≤ \max_{1 ≤ i ≤ n} e \sqrt{2i} = e \sqrt{2n}.
  \end{align}
  \myparagraph{Putting the Observations Together.}
  Combining our bounds on numerator \ref{eq:numOfOrientations} and denominator \ref{eq:conditionalPseudotree} gives the final result
  \begin{align*}
    ℙ(∃f: \orient(f)) &≥ 2^n e^{-n}\sqrt{2 \pi n} / (e \sqrt{2n})\\
                      &= (e/2)^{-n}e^{-1}\sqrt{\pi}.
  \end{align*}
  \vspace{-2.5\baselineskip}\\\phantom{foo}\hfill
\end{proof}

ShockHash tries different hash function seeds,
which is equivalent to generating random graphs.
Given the probability that the random graph is a
pseudoforest, it is easy to determine the expected
number of graphs ShockHash needs to try in order
to find an MPHF.  This leads directly to the space
usage and construction time of ShockHash, which we analyze in
the following.

\begin{theorem}
  \label{thm:shockHashConstruction}
  A ShockHash minimal perfect hash function mapping $n$ keys to $[n]$ needs $\log_2(e)n + o(n)$ bits of space in expectation and can be constructed in expected time $\Oh((e/2)^n \cdot n)$.
\end{theorem}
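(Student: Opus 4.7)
The plan is to decompose the analysis into the two components identified in \cref{s:shockhash}, namely the Golomb--Rice encoded \emph{seed} and the 1-bit \emph{retrieval} data structure for $f$, and, for the construction time, to multiply the expected number of seeds tried by the per-seed cost.

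For the space bound, I would first appeal to \cref{lem:constructionTries}: each independently drawn seed succeeds with probability $p \geq (e/2)^{-n}e^{-1}\sqrt{\pi}$, so the index $s$ of the first successful seed is a geometric random variable with mean at most $1/p \leq (e/2)^n \cdot e/\sqrt{\pi}$. Next I would invoke the standard fact (Gallager--Van Voorhis) that a non-negative integer random variable with expectation $\mu$ can be encoded in $\log_2(\mu) + \Oh(1)$ bits in expectation using a Golomb--Rice code whose divisor is an appropriate power of two. Applied to $s$, this yields an expected seed length of $n\log_2(e/2) + \Oh(\log n) = (\log_2 e - 1)n + o(n)$ bits. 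Adding the $n + o(n)$ bits contributed by the BuRR retrieval structure for $f$ then gives the claimed total of $n\log_2 e + o(n)$ bits.

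For the time bound, I would observe that the expected number of seeds tried is $1/p \in \Oh((e/2)^n)$, and that each seed can be processed in $\Oh(n)$ time: build the multigraph $G$, run a connected-components scan, and check that no component contains more edges than vertices; on success, a $1$-orientation $f$ (and hence the retrieval values) is obtained in linear time by orienting the unique cycle of each component arbitrarily and directing the remaining edges away from it. Multiplying these two quantities yields the claimed $\Oh((e/2)^n \cdot n)$ expected construction time.

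The technical heart of the argument is already established in \cref{lem:constructionTries}, so the remaining steps are largely bookkeeping. The only mildly subtle point will be justifying the logarithmic-expectation bound for the Golomb--Rice encoding, but this is classical and the target bound permits $o(n)$ slack. The filter of \cref{lem:filterProbability} and the rotation-fitting refinement only improve the hidden constants in the per-seed cost and do not affect the asymptotic statement proved here.
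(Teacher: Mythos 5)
Your proposal is correct and follows essentially the same route as the paper: cite \cref{lem:constructionTries} for the success probability, bound the expected number of trials by its reciprocal, account for the seed via a roughly $\log_2 s$-bit code plus the $n+o(n)$ retrieval bits, and multiply expected trials by the linear per-seed cost for the time bound. The one small stylistic difference is that where you invoke the Golomb--Rice expected-codeword-length bound explicitly, the paper instead applies Jensen's inequality to get $\mathds{E}[\log_2(\text{seeds})]\leq\log_2\mathds{E}[\text{seeds}]$ and then implicitly relies on the same encoding fact; both amount to the same estimate and your phrasing is arguably a hair more careful about where the $\Oh(1)$ coding overhead comes from. No gaps.
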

\begin{proof}
  From \cref{lem:constructionTries}, we know that
  the probability of the graph being 1-orientable
  is $\geq (e/2)^{-n} e^{-1}\sqrt{\pi}$.
  We construct these graphs uniformly at random,
  so the expected number of seeds to try is $\leq
  (e/2)^{n} e / \sqrt{\pi}$.  The space usage is
  given by the $n + o(n)$ bits for the retrieval
  data structure, plus the bits to store the hash
  function~index:
  \begin{align*}
    &\mathds{E}(\log_2(\textrm{seeds to try}))
    \overset{*}{\leq}
    \log_2(\mathds{E}(\textrm{seeds to
      try}))\\ &\leq \log_2\left((e/2)^{n}
    e / \sqrt{\pi}\right) = \log_2(e)n - n + \Oh(1).
  \end{align*}
  In the step annotated with $*$, we use Jensen's
  inequality \cite{jensen1906fonctions} and the
  fact that $\log_2$ is concave.

  For determining if any of the $2^n$ functions corresponding to such a seed is valid, we can use an algorithm for finding connected components, as described in \cref{s:sccFilter}.
  This takes linear time for each of the seeds, resulting in an overall construction time of $\Oh((e/2)^n \cdot n)$.
  Constructing the retrieval data structure is then possible in linear time \cite{dillinger2022burr} and happens only once, so it is irrelevant for the asymptotic time here.
  \hfill
\end{proof}

For the brute-force approach, each of the $e^n/\sqrt{2 \pi n}$ expected trials needs $n$ hash function evaluations, leading to a construction time of $\Oh(e^n \sqrt{n})$.
Now, as shown in \cref{thm:shockHashConstruction}, ShockHash needs time $\Oh((e/2)^n \cdot n)$.
This makes ShockHash almost $2^n$ times faster than the previous state of the art.
Given the observations in Ref. \cite{bez2022high}, we conjecture that ShockHash with rotation fitting reduces the number of hash function evaluations by an additional factor of $n$, while the space overhead tends to zero.

\section{ShockHash-RS = ShockHash + RecSplit}\label{s:shockhashInRecSplit}
As mentioned in the introduction, real world MPHF
constructions usually do not search for a function
for the entire input set directly.  Instead, they
partition the input of size $N$ and then search
on smaller subproblems of size $n$.  Even though ShockHash
demonstrates significant speedups, by itself, it
still needs exponential running~time.

To demonstrate the usefulness of ShockHash in
practice, we integrate it as a base case into the
highly space efficient RecSplit framework (see
\cref{s:recsplit}) and obtain ShockHash-RS.  We keep the general structure
of RecSplit intact, and only replace the bijection
search in the leaves.  For each leaf, we store the
mapping from its keys to their hash function
indices.  Finally, after all leaves are processed,
we construct the 1-bit retrieval data structure
with all the $N$ entries together.

\myparagraph{Fanouts.}  RecSplit tries to balance
the difficulty between the splittings and the
bijections.  ShockHash improves the performance of
the bijections significantly but does not modify
the way that the splittings are calculated.  In
this paper, we focus only on the
bijections. Similar techniques may work for splitting
with \cref{s:splittings} outlining first ideas.

To balance the amount of work done between
splittings and bijections, we need to adapt the
splitting parameters using the same techniques
as the RecSplit paper \cite{esposito2020recsplit}.
Instead of fanouts $\lceil 0.35 n + 0.5
\rceil$ and $\lceil 0.21 n + 0.9 \rceil$ for
the two last splitting levels (see
\cref{s:recsplit}), our numerical evaluation gives
$\lfloor 0.10
n + 0.5\rfloor$ and $\lfloor 0.073 n + 0.9
\rfloor$ for ShockHash-RS.  However, preliminary experiments show
that this is not optimal in practice.  ShockHash,
especially with rotation fitting, is so much
faster that the additional time invested into the
splittings does not pay off.  We find
experimentally that setting the lowest splitting
level to 4 and the second lowest to 3
achieves much better results in practice.  To also
provide faster and space-inefficient
configurations, we set all fanouts to~2 when selecting leaf size $n \leq 24$.

\myparagraph{SIMD Parallelization.}
\label{s:simd}
In ShockHash, we use SIMD parallelism in two locations.
First, we use SIMD to hash all keys and determine the bit mask of the two sets.
A key point here is to collect the bitwise \texttt{OR} of individual lanes and to only add the lanes together after all keys are done.
Second, we use SIMD to evaluate the bit mask filter (see \cref{s:bitmaskFilter}) with different rotations in parallel.
Our implementation uses AVX-512 (8 64-bit values) if available and AVX2 (4 64-bit values) otherwise.

\section{Refinements, Variants and Future Directions}\label{s:refinements}
In the following section, we describe variants and implementation details of ShockHash.
Additionally, we give ideas for future directions.

\myparagraph{Implementation Details.}
Determining whether a given graph is a pseudoforest can be achieved in linear time using a connected components algorithm.
However, this is likely not practical because the graph data structure needs to be built.
In practice, we therefore use incremental cuckoo hash table construction with near linear time.
In \cref{s:unionFindFilter}, we describe another approach using a union-find data structure that is particularly interesting for the rotation fitting variant.
However, in our experiments, we find incremental cuckoo hash table construction to be more efficient.
In \cref{s:partialHashCalculation}, we describe a way to reduce the number of hash function evaluations in the rotation fitting variant.
The idea is to fix the hash function for the first group of keys over a segment of $k$ tried seeds.
For large $n$, this saves considerable time while having negligible impact on the space consumption.

\myparagraph{Parallelization.}
While our implementation of ShockHash-RS supports basic multi-threaded construction based on SIMDRecSplit \cite{bez2022high}, we do not discuss this here.
A missing piece for a full parallelization is the BuRR retrieval data structure \cite{dillinger2022burr} which is currently only available as sequential code.
BuRR construction is parallelizable on $p$ cores with additional space overhead of about $p$ machine words.
However, for highly space-efficient MPHFs, which are the main focus of this paper, constructing the retrieval data structure is not a bottleneck.
GPU parallelization might be difficult and inefficient for cuckoo hashing as it has irregular control flow and memory access.
Since filtering asymptotically dominates the computations for highly space-efficient variants, one might look at a hybrid implementation where a GPU produces a stream of seeds defining random graphs that cover all nodes and where a multicore CPU performs further stages of computation.

\myparagraph{Fast Splitting for RecSplit.}\label{ss:fastSplitting}
ShockHash-RS
significantly improves bijection search within the
RecSplit framework which can make brute-force search for
splittings a significant part of the running time
in some configurations. In \cref{s:splittings} we therefore outline
how to find a splitting with a constant number of trials in expectation.
The idea is to replace the binary splitting hash function by one with a larger range and to store a compressed threshold value that leads to an exact split.

\myparagraph{Beyond RecSplit.}
ShockHash-RS suffers from comparatively large query times as RecSplit queries have to traverse a splitting tree, decoding variable-bitlength data on each level. This overhead is not inherent in ShockHash itself and we could also integrate it into other frameworks.
One way is to just use \emph{two levels of hash-based splitting} allowing for variable sized leaves (see \cref{ss:FastShockHash}).
ShockHash can also be adapted to the \emph{Hash-and-Displace} framework (see \cref{s:related}).
For each bucket, we now look for hash function pairs
whose overall result is a pseudoforest. Once more, the remaining
decisions are made using cuckoo hashing and stored in a retrieval data structure.
Compared to splitting based approaches, this is perhaps more elegant as the global hash range alleviates the need for storing and using splitting information like prefix sums.
Refer to \cref{ss:ShockDisplace} for more details.

\section{Experiments}\label{s:experiments}
We run our experiments on an Intel i7 11700 processor with 8 cores and a base clock speed of 2.5 GHz.
The machine runs Ubuntu 22.04 with Linux 5.15.0 and supports AVX-512 instructions.
We use the GNU C++ compiler version 11.2.0 with optimization flags \texttt{-O3 -march=native}.
For the competitors written in Rust, we compile in release mode with \texttt{target-cpu=native}.
As input data, we use strings of uniform random length $\in [10, 50]$ containing random characters except for the zero byte.
Note that, as a first step, almost all compared codes generate a \emph{master hash code} of each key using a high quality hash function.
Any possible additional hash function can then be evaluated on the master hash code in constant time, independent of the input distribution.
All experiments use a single thread.
While almost all compared codes have a multi-threaded implementation, and perfect hashing can be parallelized trivially by partitioning, this is not the focus here.
The code and scripts needed to reproduce our experiments are available on GitHub under the General Public License \cite{sourceCode,sourceCodeCompetitors}.

\begin{figure}[t]
  \centering
      \begin{tikzpicture}
        \begin{axis}[
            xlabel={$n$},
            ylabel={Avg. successful seed},
            plotHfEvals,
            ymode=log,
            ymax=1e10,
            legend columns=4,
            transpose legend,
            legend to name=legendHashEvals,
          ]
          \addplot+[color=colorBruteForce] coordinates { (6,64.8549) (8,416.091) (10,2767.63) (12,18417.6) (14,127722.0) (16,856746.0) (18,6.48792e+06) (20,4.09977e+07) (22,2.05908e+08) };
          \addlegendentry{Brute-force \cite{esposito2020recsplit}};
          \addplot+[color=colorRotationFitting] coordinates { (6,13.3014) (8,56.3617) (10,285.461) (12,1573.24) (14,8911.15) (16,54628.5) (18,369627.0) (20,1.9798e+06) (22,1.99377e+07) (24,7.45176e+07) };
          \addlegendentry{Rotation fitting \cite{bez2022high}};
          \addplot+[mark=triangle,color=black] coordinates { (6,2.2757) (8,3.8287) (10,6.73595) (12,11.5547) (14,20.3844) (16,36.5247) (18,64.9892) (20,116.499) (22,208.758) (24,374.636) (26,680.764) (28,1221.88) (30,2238.19) (32,4023.07) (34,7519.32) (36,13354.2) (38,24472.3) (40,44311.0) };
          \addlegendentry{ShockHash};
          \addplot+[mark=shockhash,color=colorShockHash] coordinates { (6,1.12525) (8,1.23685) (10,1.4658) (12,1.85955) (14,2.40425) (16,3.3074) (18,4.78105) (20,7.2033) (22,10.88855) (24,17.4039) (26,27.6061) (28,45.1996) (30,75.3199) (32,127.268) (34,223.582) (36,357.877) (38,625.271) (40,1065.71) };
          \addlegendentry{ShockHash + RF};

          \addplot[mark=none,densely dotted,color=colorBruteForce] coordinates { (5,26.4787) (7,165.357) (10,2778.78) (15,336730) (20,4.32797e+07) (30,7.78366e+11) (40,1.48477e+16) };
          \addlegendentry{$e^n/\sqrt{2 \pi n}$};
          \addplot[mark=none,densely dotted,color=colorRotationFitting] coordinates { (5,5.29575) (7,23.6224) (10,277.878) (15,22448.7) (20,2.16399e+06) (30,2.59455e+10) (40,3.71193e+14) };
          \addlegendentry{$e^n/\sqrt{2 \pi n}/n$};
          \addplot[mark=none,densely dotted,color=black] coordinates { (5,7.11282) (7,13.1393) (10,32.9886) (15,152.998) (20,709.593) (30,15263.5) (40,328321) };
          \addlegendentry{$(e/2)^ne/\sqrt{\pi}$};
          \addplot[mark=none,densely dotted,color=colorShockHash] coordinates { (5,1.42256) (7,1.87704) (10,3.29886) (15,10.1999) (20,35.4796) (30,508.783) (40,8208.03) };
          \addlegendentry{$(e/2)^ne/(n\sqrt{\pi})$};
        \end{axis}
    \end{tikzpicture}
    \hspace{-3mm}
    \begin{tikzpicture}
        \begin{axis}[
            plotHfEvals,
            title={},
            xlabel={$n$},
            ylabel={Bits space overhead}, %
            cycle list name=myColorList,
          ]
          \addplot+[color=colorBruteForce] coordinates { (6,-0.0234508) (8,-0.0100292) (10,0.00662729) (12,-0.00246884) (14,0.00832896) (16,-0.0123481) (18,-0.0181847) (20,0.0455501) (22,0.0731805) };
          \addlegendentry{Brute-Force  \cite{esposito2020recsplit}};
          \addplot+[mark=square,color=colorRotationFitting] coordinates { (6,0.236825) (8,0.103994) (10,0.0384379) (12,0.0303382) (14,-0.0208516) (16,0.00583817) (18,0.0344676) (20,-0.0344609) (22,0.0682468) };
          \addlegendentry{Rotation Fitting \cite{bez2022high}};
          \addplot+[mark=triangle,color=black] coordinates { (6,0.351081) (8,0.818735) (10,1.06764) (12,1.22911) (14,1.32902) (16,1.39139) (18,1.45315) (20,1.49068) (22,1.52505) (24,1.57209) (26,1.57358) (28,1.61109) (30,1.63094) (32,1.69517) (34,1.65146) (36,1.68107) (38,1.68893) (40,1.71764) };
          \addlegendentry{ShockHash};
          \addplot+[mark=shockhash,color=colorShockHash] coordinates { (6,1.06005) (8,1.47376) (10,1.67663) (12,1.76716) (14,1.7863) (16,1.78234) (18,1.76139) (20,1.73096) (22,1.72061) (24,1.68402) (26,1.6575) (28,1.675) (30,1.64282) (32,1.63045) (34,1.67673) (36,1.67297) (38,1.68878) (40,1.6969) };
          \addlegendentry{ShockHash + RF};

          \legend{};
        \end{axis}
    \end{tikzpicture}

    \begin{tikzpicture}
        \ref*{legendHashEvals}
    \end{tikzpicture}
  \caption{Left: Average successful seed of ShockHash compared with more simple brute-force techniques. Right: Idealized space overhead over the lower bound $log_2(n^n/n!)$ in bits. If the average seed is $s$ we charge $\log_2(s)$ bits, plus $n$ bits for retrieval (if applicable).}
  \label{fig:hashFunctionEvals}
\end{figure}
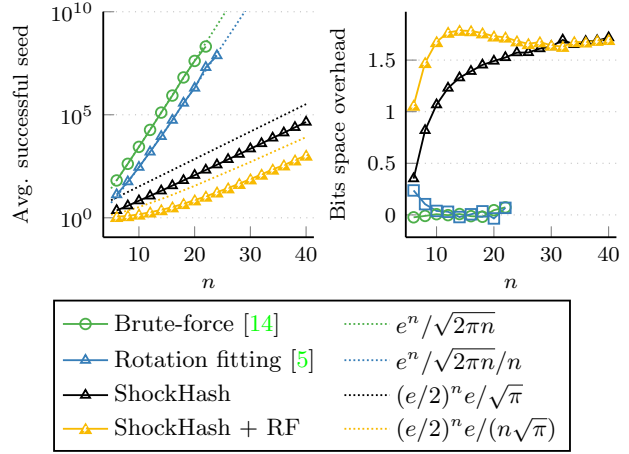

\subsection{Number of Trials in Theory and Practice.}
In \cref{fig:hashFunctionEvals}, we compare the
average number of hash function trials over
multiple runs for each bijection search technique.
From the different slopes of the curves, it is
clearly visible that rotation fitting
\cite{bez2022high} saves a polynomial factor compared to plain brute-force,
while ShockHash saves an exponential factor.
Additionally, we plot the shown upper bounds for the
number of trials of brute-force and ShockHash.
For the rotation fitting
variants, we plot the base variants divided by
$n$, which is not formally shown to be a
theoretical bound, but is an obvious conjecture.
The plot shows that brute-force and rotation
fitting are close to the given functions.  For
ShockHash, the measurements are even better than
the theory, which indicates that our proof in
\cref{lem:constructionTries} is not tight.
Surprisingly, ShockHash seems to match the
function we get when dividing our analysis by $\sqrt{n}$.
We conjecture
that the expected number of 1-orientations of a random
pseudoforest might actually not be $e\cdot\sqrt{2n}$, but close to constant.  This
makes ShockHash an even better replacement for the
brute-force technique.

\Cref{fig:hashFunctionEvals} also gives the difference between the idealized space consumption and the space lower bound $\log_2(n^n/n!)$.
It indicates that ShockHash loses space sublinear in $n$, which becomes negligible for larger $n$.
However, this explains why we need to select larger $n$ in ShockHash-RS compared to RecSplit to achieve the same space consumption per key.
Even with these larger $n$, ShockHash construction is significantly faster than brute-force.

\subsection{ShockHash-RS.}
In this section, we evaluate ShockHash-RS, which uses ShockHash as a base case in the RecSplit framework.
For BuRR retrieval \cite{dillinger2022burr} we use 2-bit bumping info and 128-bit words (64 bits for $n\leq 24$).
To partition keys to buckets, we sort them by their key using IPS$^2$Ra \cite{axtmann2020engineering}.
We also apply the idea to reduce hash function evaluations by fixing the hash function for half of the keys over multiple search iterations.
Refer to \cref{s:partialHashCalculation} for more details.

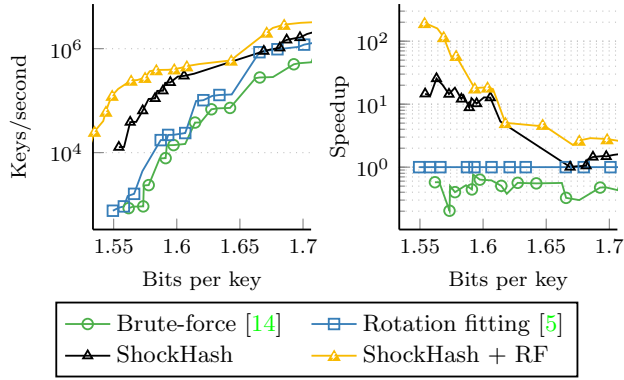
\begin{figure}[t]
  \centering
      \centering
    \begin{tikzpicture}
        \begin{axis}[
            xlabel={Bits per key},
            ylabel={Keys/second},
            plotLeafMethods,
            xmin=1.54,
            xmax=1.7,
            ymode=log,
            legend to name=paretoLeafMethodsLegend,
            legend columns=2,
            mark repeat*=2,
          ]

          \addplot+[color=colorBruteForce] coordinates { (1.56187,855.165) (1.56532,913.505) (1.57341,924.95) (1.57387,2338.81) (1.57774,2372.54) (1.58647,7742.93) (1.59115,7837.42) (1.59211,13875.8) (1.59744,13966.3) (1.60665,14693.3) (1.61444,37916.1) (1.61881,38028.6) (1.62742,68240.8) (1.63226,71068.2) (1.64217,72505.8) (1.6624,271518) (1.66575,279252) (1.6762,284657) (1.69271,501253) (1.69725,526039) (1.70668,547945) (1.7108,1.05708e+06) (1.71625,1.1655e+06) (1.72474,1.26582e+06) (1.77763,1.53139e+06) (1.78249,1.75747e+06) (1.79287,1.97239e+06) (1.85913,2.53165e+06) (1.87058,2.97619e+06) (1.88029,3.59712e+06) (1.94514,4.40529e+06) (1.94641,5.07614e+06) (2.01165,6.71141e+06) (2.21301,8.69565e+06) };
          \addlegendentry{Brute-force \cite{esposito2020recsplit}};
          \addplot+[mark=square,color=colorRotationFitting] coordinates { (1.54937,762.965) (1.5539,869.904) (1.55807,924.288) (1.56111,1464.52) (1.56618,1612.49) (1.57274,4392.71) (1.58726,17416.1) (1.59199,17764.8) (1.59316,21926) (1.59771,22053.1) (1.60669,23839.6) (1.6153,98668) (1.62106,102124) (1.6281,123457) (1.63376,128535) (1.64351,131614) (1.66547,855432) (1.67102,912409) (1.67973,975610) (1.69652,1.08342e+06) (1.7009,1.1976e+06) (1.70938,1.3089e+06) (1.72552,1.45138e+06) (1.73125,1.63666e+06) (1.74175,1.81159e+06) (1.79046,2.05339e+06) (1.7971,2.44499e+06) (1.80611,2.88184e+06) (1.87396,4.20168e+06) (1.94279,4.56621e+06) (2.00047,4.80769e+06) (2.01128,7.04225e+06) (2.06834,7.75194e+06) (2.31379,8.92857e+06) };
          \addlegendentry{Rotation fitting \cite{bez2022high}};
          \addplot+[mark=triangle,color=black] coordinates { (1.55409,12734.5) (1.55822,12877.3) (1.56309,38665.3) (1.56787,39117.5) (1.57298,64180.7) (1.5777,107875) (1.58283,110448) (1.58452,154226) (1.58907,155039) (1.59079,220264) (1.59489,224568) (1.60069,289855) (1.60575,299760) (1.61369,329924) (1.66896,894454) (1.67362,958773) (1.68165,1.0395e+06) (1.68283,1.31062e+06) (1.68704,1.4771e+06) (1.69526,1.61551e+06) (1.69758,1.63934e+06) (1.70316,1.91571e+06) (1.71077,2.15983e+06) (1.72186,2.28311e+06) (1.72971,2.69542e+06) (1.75478,3.06748e+06) (1.78713,3.37838e+06) (1.79001,3.47222e+06) (1.81581,4.23729e+06) (1.84874,4.6729e+06) (1.89422,5.05051e+06) (1.95481,5.29101e+06) (2.05775,5.52486e+06) (2.37803,5.52486e+06) };
          \addlegendentry{ShockHash};
          \addplot+[mark=shockhash,color=colorShockHash] coordinates { (1.52457,3068.67) (1.52672,5192.7) (1.53,8966.04) (1.53216,15598.2) (1.53474,25455.7) (1.54172,40472.7) (1.54381,61888.8) (1.54633,91382.6) (1.54874,125313) (1.55358,168407) (1.56376,244738) (1.56871,249252) (1.57478,277085) (1.57855,368732) (1.58372,389408) (1.59332,392773) (1.59792,403714) (1.60346,428816) (1.60787,461894) (1.61763,505817) (1.64305,595593) (1.64711,723589) (1.67111,2.05761e+06) (1.67614,2.51256e+06) (1.68498,2.90698e+06) (1.69819,3.14465e+06) (1.71955,3.33333e+06) (1.74229,3.663e+06) (1.75757,4.13223e+06) (1.77769,4.5045e+06) (1.80326,4.83092e+06) (1.83799,5.34759e+06) (2.11603,5.52486e+06) };
          \addlegendentry{ShockHash + RF};

        \end{axis}
    \end{tikzpicture}
    \hspace{-3mm}
    \begin{tikzpicture}
        \begin{axis}[
            xlabel={Bits per key},
            ylabel={Speedup},
            plotLeafMethods,
            xmin=1.54,
            xmax=1.7,
            ymode=log,
            mark repeat*=2,
            ylabel style={yshift=-1ex},
          ]

          \addplot+[color=colorBruteForce] coordinates { (1.56187,0.575890) (1.56532,0.576227) (1.57341,0.203299) (1.57387,0.501446) (1.57774,0.401031) (1.58647,0.516298) (1.59115,0.442745) (1.59211,0.765878) (1.59744,0.633519) (1.60665,0.616564) (1.61444,0.504759) (1.61881,0.377471) (1.62742,0.563903) (1.63226,0.558937) (1.64217,0.552713) (1.6624,0.561437) (1.66575,0.325417) (1.6762,0.299964) (1.69271,0.474259) (1.69725,0.477819) (1.70668,0.431017) (1.7108,0.800636) (1.71625,0.853236) (1.72474,0.876740) (1.77763,0.772007) (1.78249,0.874579) (1.79287,0.904714) (1.85913,0.662846) (1.87058,0.724494) (1.88029,0.849830) (1.94514,0.962783) (1.94641,1.108171) (2.01165,0.952454) (2.21301,1.034609) };
          \addlegendentry{bruteforce};
          \addplot+[mark=square,color=colorRotationFitting] coordinates { (1.54937,1.000000) (1.5539,1.000000) (1.55807,1.000000) (1.56111,1.000000) (1.56618,1.000000) (1.57274,1.000000) (1.58726,1.000000) (1.59199,1.000000) (1.59316,1.000000) (1.59771,1.000000) (1.60669,1.000000) (1.6153,1.000000) (1.62106,1.000000) (1.6281,1.000000) (1.63376,1.000000) (1.64351,1.000000) (1.66547,1.000000) (1.67102,1.000000) (1.67973,1.000000) (1.69652,1.000000) (1.7009,1.000000) (1.70938,1.000000) (1.72552,1.000000) (1.73125,1.000000) (1.74175,1.000000) (1.79046,1.000000) (1.7971,1.000000) (1.80611,1.000000) (1.87396,1.000000) (1.94279,1.000000) (2.00047,1.000000) (2.01128,1.000000) (2.06834,1.000000) (2.31379,1.000000) };
          \addlegendentry{rotations};
          \addplot+[mark=triangle,color=black] coordinates { (1.55409,14.5997) (1.55822,13.6786) (1.56309,25.4552) (1.56787,20.3035) (1.57298,14.4302) (1.5777,18.2847) (1.58283,12.0781) (1.58452,13.8096) (1.58907,8.835153) (1.59079,12.4619) (1.59489,10.2196) (1.60069,12.8166) (1.60575,12.6807) (1.61369,5.306363) (1.66896,1.004558) (1.67362,1.030495) (1.68165,1.053364) (1.68283,1.318699) (1.68704,1.448436) (1.69526,1.503481) (1.69758,1.478202) (1.70316,1.563367) (1.71077,1.636157) (1.72186,1.611889) (1.72971,1.703405) (1.75478,1.639915) (1.78713,1.660283) (1.79001,1.693057) (1.81581,1.404310) (1.84874,1.301475) (1.89422,1.173775) (1.95481,1.146602) (2.05775,0.726037) (2.37803,0.000000) };
          \addlegendentry{plain};
          \addplot+[mark=shockhash,color=colorShockHash] coordinates { (1.52457,0.000000) (1.52672,0.000000) (1.53,0.000000) (1.53216,0.000000) (1.53474,0.000000) (1.54172,0.000000) (1.54381,0.000000) (1.54633,0.000000) (1.54874,0.000000) (1.55358,195.509) (1.56376,159.096) (1.56871,116.844) (1.57478,56.4514) (1.57855,58.8251) (1.58372,38.5206) (1.59332,17.91) (1.59792,18.2743) (1.60346,18.5117) (1.60787,17.3613) (1.61763,5.056273) (1.64305,4.530427) (1.64711,4.735214) (1.67111,2.253634) (1.67614,2.648905) (1.68498,2.886936) (1.69819,2.797005) (1.71955,2.389139) (1.74229,2.019338) (1.75757,2.193758) (1.77769,2.270453) (1.80326,1.771070) (1.83799,1.581739) (2.11603,0.694458) };
          \addlegendentry{plainRotate};

          \legend{};
        \end{axis}
    \end{tikzpicture}

    \begin{tikzpicture}
        \ref*{paretoLeafMethodsLegend}
    \end{tikzpicture}
  \caption{Space versus construction time of RecSplit, with different base case methods of calculating the leaves plugged in. Basic version without SIMD parallelization, $N=1$ million. The plot on the right gives speedups\footref{fn:paretoSpeedups} relative to the current state of the art, the rotation fitting method \cite{bez2022high}.
   We plot all Pareto optimal data points but only show markers for every second point to increase readability. Therefore, the lines might bend on positions without markers.}
  \label{fig:leafMethods}
\end{figure}

\addtocounter{footnote}{1}
\footnotetext{\label{fn:paretoSpeedups}Note that
  giving speedups is non-trivial here because
  there might not be a configuration that achieves
  the same space usage that we could compare
  with. We therefore calculate the speedup
  relative to an interpolation of the next larger
  and next smaller data points. This is reasonable
  since RecSplit instances can be interpolated as
  well by hashing a certain fraction of keys into
  data structures with different configurations.}

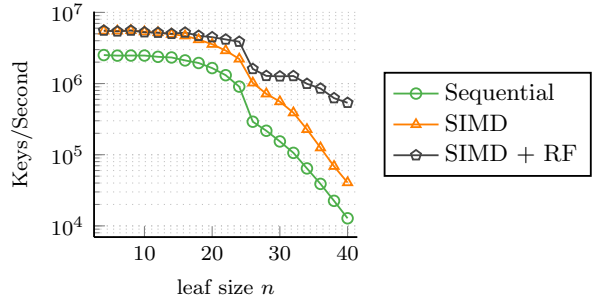
\begin{figure}[t]
  \centering
      \centering
    \hspace{8mm}

    \begin{tikzpicture}
        \begin{axis}[
            xlabel={leaf size $n$},
            plotParameters,
            ymode=log,
            ylabel={Keys/Second},
            legend style={at={(1.1,0.5)},anchor=west},
            legend columns=1,
          ]
          \addplot coordinates { (4,2.51256e+06) (6,2.46914e+06) (8,2.47525e+06) (10,2.47525e+06) (12,2.3753e+06) (14,2.32558e+06) (16,2.11416e+06) (18,1.94553e+06) (20,1.65017e+06) (22,1.31062e+06) (24,905797) (26,290951) (28,217155) (30,153704) (32,106191) (34,64189) (36,38922.6) (38,22486.1) (40,12786.9) };
          \addlegendentry{Sequential};
          \addplot coordinates { (4,5.49451e+06) (6,5.49451e+06) (8,5.58659e+06) (10,5.34759e+06) (12,5.20833e+06) (14,4.92611e+06) (16,4.69484e+06) (18,4.14938e+06) (20,3.57143e+06) (22,2.90698e+06) (24,2.21729e+06) (26,1.02249e+06) (28,717360) (30,557414) (32,389257) (34,226860) (36,125360) (38,68785.3) (40,40586.1) };
          \addlegendentry{SIMD};
          \addplot coordinates { (4,5.55556e+06) (6,5.40541e+06) (8,5.55556e+06) (10,5.26316e+06) (12,5.18135e+06) (14,5e+06) (16,5.20833e+06) (18,4.69484e+06) (20,4.4843e+06) (22,4.16667e+06) (24,3.87597e+06) (26,1.60514e+06) (28,1.28205e+06) (30,1.26263e+06) (32,1.27714e+06) (34,993049) (36,851789) (38,624220) (40,536769) };
          \addlegendentry{SIMD + RF};
        \end{axis}
    \end{tikzpicture}
  \caption{Construction performance of ShockHash-RS using different leaf sizes. With and without SIMD parallelization. $N=1$ million keys, $b=2000$.}
  \label{fig:parameters}
\end{figure}

\Cref{fig:leafMethods} compares the space versus construction time tradeoff
of plain brute-force
\cite{esposito2020recsplit}, brute-force with rotation fitting
\cite{bez2022high}, and ShockHash (with and without rotation fitting).
For each method we measure many parameter settings and only plot those that are
Pareto-optimal, i.e., not dominated by another configuration of the same method regarding both space and construction time. 
 Using the
same amount of space, basic ShockHash can
construct MPHFs up to \maxSpeedupNonSimdPlain{}
times faster than brute-force with rotation fitting.  ShockHash
with rotation fitting improves that to a
factor of up to \maxSpeedupNonSimdPlainRotate{}.
Here, ShockHash further profits from the
fact that filtering the rotations can happen
efficiently in registers.

\Cref{fig:parameters} plots the construction performance of ShockHash-RS with different leaf sizes with and without SIMD parallelization, as well as with rotation fitting.
ShockHash-RS mainly profits from rotation fitting if the leaf size is $n>20$ -- for small leaves it is likely that several rotations cover all leaf positions.
Indeed, the solution of $(1-1/e^2)^n = 1/n$ is about 21 (see \cref{lem:filterProbability}).
\Cref{fig:parameters} also shows that the SIMD parallelization improves the construction time significantly for all values of $n$.
The drop of construction performance at $n \geq 25$ is caused by the fact that we switch the splitting strategy (and the BuRR configuration) for increased space efficiency.

\subsection{Comparison with Competitors.}
We now compare ShockHash-RS with competitors from the
literature.  Competitors include CHD
\cite{belazzougui2009hash}, SicHash
\cite{lehmann2023sichash}, PTHash
\cite{pibiri2021pthash}, FMPHGO
\cite{beling2023fingerprinting}, RecSplit
\cite{esposito2020recsplit}, and SIMDRecSplit
\cite{bez2022high}.  We do not plot BBHash
\cite{limasset2017fast} because it is
significantly outperformed by
FMPH \cite{beling2023fingerprinting},
another implementation of the same technique.  While
SIMDRecSplit also includes a fast GPU
implementation, we do not compare it here, as it
would be unfair because of the different hardware
architecture.

\begin{figure}[t]
  \centering \centering
\def\diff{1.46}
    \begin{tikzpicture}
        \begin{axis}[
            plotPareto,
            xlabel={Bits/key},
            ylabel={Throughput (Keys/s)},
            legend to name=paretoLegend,
            legend columns=2,
            xmin=1.50-\diff,
            xmode=log,
            ymode=log,
            minor xtick={1.48-\diff,1.49-\diff,1.51-\diff,1.52-\diff,1.53-\diff,1.54-\diff,1.55-\diff,1.56-\diff,1.57-\diff,1.58-\diff,1.59-\diff},
            xtick=      {1.5-\diff, 1.6-\diff, 1.7-\diff, 1.8-\diff, 1.9-\diff, 2.0-\diff, 2.1-\diff, 2.2-\diff, 2.3-\diff, 2.4-\diff, 2.5-\diff, 2.6-\diff, 2.7-\diff,
                         2.8-\diff, 2.9-\diff, 3-\diff, 3.1-\diff, 3.2-\diff, 3.3-\diff, 3.4-\diff, 3.5-\diff},
            xticklabels={1.5,       1.6,       1.7,       1.8,       ~,         2.0,       ~,         ~,         ~,         ~,         2.5,       ~,         ~,
                         ~,         ~,         3,       ~,         ~,         ~,         ~,         ~},
          ]
          \addplot[mark=x,color=colorChd,solid] coordinates { (0.54275,162161) (0.58152,288484) (0.58288,569930) (0.62322,863036) (0.64531,1.60953e+06) (0.68796,2.07943e+06) (0.73418,3.35683e+06) (0.76632,3.8373e+06) (0.84023,5.09424e+06) (0.87893,5.40833e+06) (0.99038,5.85138e+06) (1.12562,6.08273e+06) (1.17309,6.23053e+06) (1.30193,6.42674e+06) };
          \addlegendentry{CHD \cite{belazzougui2009hash}};
          \addplot[mark=asterisk,color=colorRustFmph,solid] coordinates { (1.34355,1.14025e+07) (1.36692,1.26582e+07) (1.42765,1.41844e+07) (1.51519,1.53846e+07) (1.61441,1.65563e+07) (1.727,1.75439e+07) (1.84778,1.85185e+07) };
          \addlegendentry{FMPH \cite{beling2023fingerprinting}};
          \addplot[mark=diamond,color=colorRustFmphGo,solid] coordinates { (0.75308,6.94927e+06) (0.78116,7.93651e+06) (0.84524,8.69565e+06) (0.93367,9.32836e+06) (1.04083,9.82318e+06) (1.16255,1.01215e+07) (1.29584,1.03734e+07) };
          \addlegendentry{FMPHGO \cite{beling2023fingerprinting}};
          \addplot[mark=pentagon,color=colorPthash,solid,mark repeat*=4] coordinates { (0.54215,326755) (0.5817,437178) (0.5925,706564) (0.62976,903914) (0.64301,1.241e+06) (0.67741,1.51883e+06) (0.69528,1.86289e+06) (0.72749,2.18914e+06) (0.73565,2.50125e+06) (0.79117,2.8393e+06) (0.79553,3.10752e+06) (0.8245,3.44116e+06) (0.83696,3.65764e+06) (0.88978,3.98089e+06) (0.90626,4.12882e+06) (0.92272,4.42478e+06) (0.93673,4.55996e+06) (0.97391,4.82859e+06) (0.98848,4.91642e+06) (1.03968,5.31067e+06) (1.06628,5.47046e+06) (1.08645,5.48246e+06) (1.10863,5.70451e+06) (1.12776,5.70776e+06) (1.16196,5.90667e+06) (1.17215,5.95238e+06) (1.21449,6.14628e+06) (1.24206,6.33312e+06) (1.28059,6.48088e+06) (1.32528,6.58328e+06) (1.3739,6.72495e+06) (1.4286,6.83527e+06) (1.45656,6.86342e+06) (1.48246,6.97837e+06) (1.51179,7.02741e+06) (1.54681,7.09723e+06) (1.57104,7.24113e+06) (1.93032,7.26216e+06) };
          \addlegendentry{PTHash \cite{pibiri2021pthash}};
          \addplot[mark=square,color=colorRecSplit,solid] coordinates { (0.1244,7902.7) (0.13014,14056.1) (0.14551,14894.7) (0.15281,37885.4) (0.16548,68818.9) (0.18095,72320.2) (0.24938,1.0227e+06) (0.26455,1.20963e+06) (0.31662,1.43699e+06) (0.33048,1.85908e+06) (0.39987,2.3084e+06) (0.41797,3.24254e+06) (0.48451,3.8373e+06) (0.48891,4.2123e+06) (0.55193,5.29101e+06) (0.62979,5.75705e+06) (0.75408,6.38162e+06) (0.82698,6.95894e+06) (1.46786,6.96379e+06) (1.78648,7.44048e+06) };
          \addlegendentry{RecSplit \cite{esposito2020recsplit}};
          \addplot[mark=+,color=colorSimdRecSplit,solid] coordinates { (0.10005,7278.14) (0.11168,20867.6) (0.12519,85321.3) (0.13051,114911) (0.14591,125518) (0.15396,568150) (0.16679,633955) (0.18239,690369) (0.20372,3.99202e+06) (0.21852,4.44444e+06) (0.23476,5.00751e+06) (0.24902,5.7241e+06) (0.26484,6.36943e+06) (0.28015,7.43494e+06) (0.33013,7.5358e+06) (0.34415,9.01713e+06) (0.41357,1.08342e+07) (0.48294,1.16959e+07) (0.55197,1.28041e+07) (0.61215,1.2987e+07) (0.63028,1.36054e+07) (0.6915,1.38696e+07) (0.99747,1.50376e+07) (1.11351,1.63399e+07) (1.38263,1.70358e+07) (1.50042,1.71233e+07) };
          \addlegendentry{SIMDRecSplit \cite{bez2022high}};
          \addplot[mark=shockhash,color=colorShockHash,solid,mark repeat*=4] coordinates { (0.06306,8961.08) (0.06443,15933.7) (0.06644,26258.2) (0.06919,42659.9) (0.07195,70807.5) (0.07843,121448) (0.08095,177528) (0.08389,262715) (0.08725,358474) (0.09135,530560) (0.10191,838715) (0.11212,985902) (0.11614,1.26279e+06) (0.12967,1.35044e+06) (0.14629,1.42268e+06) (0.18757,1.72652e+06) (0.20127,1.74703e+06) (0.2102,3.7594e+06) (0.22285,4.09333e+06) (0.28307,4.3122e+06) (0.33052,4.84966e+06) (0.3445,5.40249e+06) (0.41101,5.78369e+06) (0.48526,5.97015e+06) (0.56741,6.05694e+06) (0.65573,6.43501e+06) };
          \addlegendentry{ShockHash-RS};
          \addplot[mark=o,color=colorSicHash,solid,mark repeat*=4] coordinates { (0.55421,4.52694e+06) (0.58393,5.79039e+06) (0.61508,6.28931e+06) (0.62381,6.43501e+06) (0.65432,6.52316e+06) (0.65436,6.75219e+06) (0.6723,6.77048e+06) (0.68,7.00771e+06) (0.68443,7.18907e+06) (0.7107,7.53012e+06) (0.74119,7.69823e+06) (0.74206,8.1103e+06) (0.77204,8.19001e+06) (0.77273,8.56164e+06) (0.80348,8.62813e+06) (0.82892,8.67303e+06) (0.83421,8.73362e+06) (0.86031,8.81057e+06) (0.86436,8.88099e+06) (0.89025,8.99281e+06) (0.92143,9.06618e+06) (0.92604,9.07441e+06) (0.95194,9.27644e+06) (0.95679,9.44287e+06) (1.01352,9.46074e+06) (1.01759,9.61538e+06) (1.0748,9.72763e+06) (1.16679,9.94036e+06) (1.22785,1.00604e+07) (1.2589,1.01317e+07) (1.28893,1.0142e+07) (1.35069,1.01626e+07) (1.44241,1.02145e+07) (1.93303,1.02564e+07) };
          \addlegendentry{SicHash \cite{lehmann2023sichash}};
        \end{axis}
    \end{tikzpicture}

    \begin{tikzpicture}
        \ref*{paretoLegend}
    \end{tikzpicture}
  \caption{Pareto front of the space usage of different competitors.
    $N=10$ million keys.
    Note that both axes are logarithmic.
    For ShockHash-RS, we use ShockHash with SIMD and rotation fitting inside RecSplit.
    For ShockHash-RS, SicHash and PTHash, we
    plot all Pareto optimal data points but only
    show markers for every fourth point to
    increase readability. Therefore, the lines
    might bend on positions without markers.}
  \label{fig:pareto}
\end{figure}
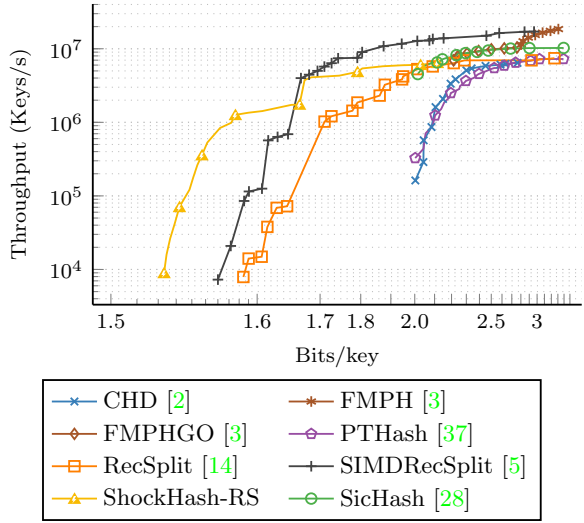

\myparagraph{Construction.}
\Cref{fig:pareto} compares the construction throughput of different competitors.
Note that we are mainly interested in the space usage configurations close to the theoretical lower bound.
Therefore, the Figure uses a logarithmic x-axis.
Only the RecSplit based competitors achieve space usage below 1.9 bits per key.
At configurations with less
than 1.65 bits per key, ShockHash-RS can
significantly outperform SIMDRecSplit.
ShockHash-RS is focused even more on the space efficient configurations than SIMDRecSplit.
Therefore, it does not achieve the
same throughput as SIMDRecSplit for the less space
efficient configurations.  This is not surprising
because with these configurations, searching for
bijections is fast, so constructing the retrieval
data structure has a significant performance
penalty.  If SIMD operations are not available for ShockHash-RS and competitors,
ShockHash-RS shows even more significant speedups to
the next best competitor, as already discussed in
the previous section.

\Cref{tab:queries} gives a selection of typical configurations.
For SicHash \cite{lehmann2023sichash}, PTHash \cite{pibiri2021pthash} and FMPHGO \cite{beling2023fingerprinting}, we use the configurations given in the original papers, where FMPHGO is configured to use the hash cache.
For RecSplit-based techniques, we mainly use space-efficient configurations with $b=2000$ and select the leaf size $n$ such that a similar space consumption is achieved.
Comparing the configurations with
a space consumption of 1.56 bit per key, ShockHash-RS
is about \speedupSIMD156{} times faster than the
next competitor.  Comparing different methods where each is given about 10 minutes of construction time,
RecSplit is able to produce a perfect hash
function with 1.58 bits per key.  During the
course of three years, this space consumption
was improved to 1.56 bits per key (SIMDRecSplit)
and now to only 1.52 bits per key (ShockHash-RS).  Compared to
SIMDRecSplit, ShockHash-RS makes the gap to the lower
space bound of $\approx 1.442$ bits per key about
30\% smaller.  Using a single CPU thread,
ShockHash-RS achieves a space usage close to what was
previously only achieved using a GPU
\cite{bez2022high}.

\myparagraph{Queries.}
\Cref{tab:queries} also shows the query throughput of typical configurations.
RecSplit-based techniques have slower queries than the other techniques as they have to traverse several levels of a tree, decoding variable-bitlength data in each step.
ShockHash-RS additionally needs to access a retrieval data structure.
However, when comparing configurations that
achieve a similar space efficiency, the query
performance of ShockHash-RS is similar to
competitors.
This shows that the overhead of the
retrieval operation is
small compared to the work for traversing the
heavily compressed tree.
When query performance is the main concern, PTHash \cite{pibiri2021pthash} trades space usage for much faster queries.

\begin{table}[t]
  \caption{Query and construction performance of
    typical configurations of ShockHash-RS and
    competitors. $N=10$ million keys.
    Space is given in bits per key and construction time is given in ns per key.
    For ShockHash-RS, we use ShockHash with SIMD and rotation fitting inside RecSplit.
    }
  \label{tab:queries}
  \centering 
\addtolength\tabcolsep{-3.6pt}
\small
\begin{tabular}[t]{ll rr}
    \toprule
    Method & Space & Constr. & Query \\ \midrule

               FMPHGO, $\gamma$=$2.0, s$=$4, b$=$16$ & 2.860 &       90 ns &  53 ns \\
               FMPHGO, $\gamma$=$1.0, s$=$4, b$=$16$ & 2.212 &      135 ns &  69 ns \\\midrule
            PTHash, $c$=$11.0$, $\alpha$=$0.88$, D-D & 4.379 &      135 ns &  25 ns \\
             PTHash, $c$=$7.0$, $\alpha$=$0.99$, C-C & 3.524 &      198 ns &  20 ns \\
              PTHash, $c$=$6.0$, $\alpha$=$0.99$, EF & 2.345 &      247 ns &  34 ns \\\midrule
     SicHash, $\alpha$=$0.9$, $p_1$=$21$, $p_2$=$78$ & 2.412 &      116 ns &  40 ns \\
    SicHash, $\alpha$=$0.97$, $p_1$=$45$, $p_2$=$31$ & 2.082 &      169 ns &  41 ns \\\midrule
                        RecSplit, $n$=$8$, $b$=$100$ & 1.792 &      713 ns &  74 ns \\
                      RecSplit, $n$=$14$, $b$=$2000$ & 1.585 & 125\,521 ns &  97 ns \\\midrule
                    SIMDRecSplit, $n$=$8$, $b$=$100$ & 1.808 &      117 ns &  80 ns \\
                  SIMDRecSplit, $n$=$14$, $b$=$2000$ & 1.585 &  11\,749 ns & 108 ns \\
                  SIMDRecSplit, $n$=$16$, $b$=$2000$ & 1.560 & 137\,902 ns & 100 ns \\\midrule
          \textbf{ShockHash-RS}, $n$=$30$, $b$=$100$ & 1.654 &      563 ns &  82 ns \\
         \textbf{ShockHash-RS}, $n$=$30$, $b$=$2000$ & 1.583 &      787 ns & 114 ns \\
         \textbf{ShockHash-RS}, $n$=$39$, $b$=$2000$ & 1.556 &   1\,805 ns & 118 ns \\\midrule
         \textbf{ShockHash-RS}, $n$=$58$, $b$=$2000$ & 1.523 & 111\,593 ns & 115 ns \\

    \bottomrule
\end{tabular}

\end{table}

\section{Conclusion and Future Work}\label{s:conclusion}
By combining trial-and-error search with retrieval
data structures computed using cuckoo hashing,
ShockHash achieves an exponential speedup over plain
brute-force (almost a factor $2^n$).
This enables
the currently most work-efficient way
to achieve near space-optimal minimal perfect
hash functions and breaks the dominance of the
previous best methods that relied on pure brute-force for their base-case subproblems.

ShockHash-RS (i.e.\ ShockHash in the RecSplit framework) is up to two
orders of magnitude faster than the state of the art when comparing
sequential codes (with and without SIMD
acceleration).  We expect that this will extend to
parallel multicore and GPU implementations, at
least when looking for highly space-efficient
functions.

It would be interesting to use ShockHash outside
the RecSplit framework in order to accelerate
query times (likely tolerating slightly higher
space consumption).  This could be done directly
by using a faster and more ``flat'' decomposition of
the input problem into subproblems. ShockHash
makes this more promising since it can support
larger base cases than brute-force. One could also
generalize perfect hash functions based on
Hash-and-Displace
\cite{belazzougui2009hash,pibiri2021pthash} to
search for pseudo-forests rather than perfect hash
functions directly.

We believe that further exponential reduction of the search space based on ShockHash will be possible in the future.
A first step in that direction is \emph{bipartite} ShockHash, which we developed during the review period.
Instead of sampling random graphs, bipartite ShockHash samples \emph{bipartite} random graphs -- it uses two hash functions of range $[n/2]$, where one of them is shifted by $n/2$.
During search, it builds a pool of hash function candidates and then tries all pairs that can be formed between them.
By filtering the candidates \emph{before} combining them, this achieves an additional exponential speedup on top of the one achieved by ShockHash.
Further work is needed to properly tune and analyze bipartite ShockHash, but we give an initial description in a technical report \cite{lehmann2023bipartite}.

\myparagraph{Acknowledgements.}
This project has received funding from the European Research Council (ERC) under the European Union’s Horizon 2020 research and innovation programme (grant agreement No. 882500) as well as from the German Research Foundation (DFG) grant 465963632.

\begin{center}
  \includegraphics[width=4cm]{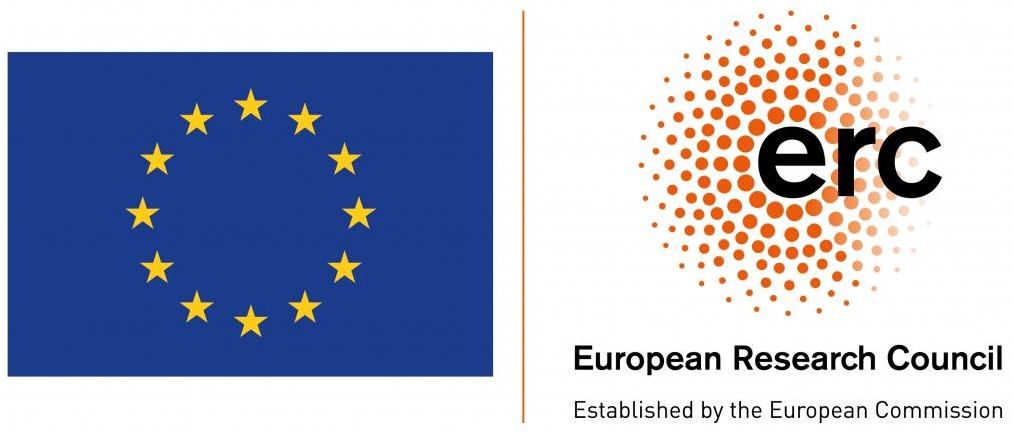}
\end{center}

\bibliographystyle{plainurl}
\bibliography{paper}

\clearpage

\appendix
\renewcommand{\thefigure}{A.\arabic{figure}}
\setcounter{figure}{0}
\renewcommand{\thetable}{A.\arabic{table}}
\setcounter{table}{0}

\noindent{\Large\bf Appendix}

\section{More Details on Refinements}\label{s:refinementsAppendix}
In this section, we give additional details on the refinements and ideas outlined in \cref{s:refinements}.

\subsection{Construction using Union-Find.}\label{s:unionFindFilter}
An alternative to using connected components or cuckoo hashing for construction is to use a union-find data structure.
A union-find data structure manages disjoint subsets of a given set, where initially each keys is its own subset.
The $\textrm{union}(x, y)$ operation merges the two subsets containing $x$ and $y$.
The $\textrm{find}(x)$ operation returns a representative for the set containing $x$.
For this data structure, there is a very simple implementation based on building trees during $\textrm{union}$ operations and collapsing them during $\textrm{find}$ operations.
With this, $m$ $\textrm{union}$ and $\textrm{find}$ operations take time $\Oh(m\cdot\alpha(m))$, where $\alpha$ is the inverse Ackermann function \cite{tarjan1975efficiency}.
The inverse Ackermann function grows extremely slowly, and has a value of less than $5$ for any reasonable input size.

During ShockHash construction, if we look at the edges individually, an edge that connects one tree and one (pseudo)tree results in a larger (pseudo)tree.
Connecting two nodes in a tree makes that tree a pseudotree.
Finally, an edge that connects two pseudotrees creates a structure that is no longer a pseudotree.
Using a union-find data structure, we can start with $n$ individual sets, where each representative is labeled as a tree.
We iteratively add additional edges and update the labels of the representatives to indicate whether a node is a tree or a pseudotree.
Once we try to union two pseudotrees, we know that the given seed cannot lead to successful construction.

With rotation fitting, this becomes significantly more interesting.
Because both sets remain constant, we can determine the union-find data structure for the larger set.
The union-find data structure can then be re-used between different rotations.
For the construction, it does not matter which set we rotate, because we can always apply the reverse rotation to the other set.
Inserting the smaller set for each rotation value can determine an invalid seed more quickly.
In practice, however, this does not improve the performance because the bit mask filters out most seeds anyway and does not require the more expensive union-find operations.

\subsection{Partial Hash Calculation.}\label{s:partialHashCalculation}
Hashing the first set of keys during rotation fitting
almost always yields a graph that, by itself, is a pseudoforest.
The reason is that this is usually close to the load threshold $c=0.5$ and $n$ is small (which enables higher load \cite{lehmann2023sichash}).
Therefore, we can keep the hashes for the first set the same and just retry hash functions for the second set.
More precisely, if $x$ is the hash function seed, we hash each key in the first set with seed $x-(x\textrm{ mod }k)$, where $k$ is a tuning parameter, and the keys of the second set with seed $x$.
Therefore, the hash values of the first set can be cached over multiple iterations.
In preliminary experiments, we find a value of $k=8$ to be a good fit -- values much larger than that have diminishing returns in performance improvement and start to influence the space consumption.
At $k=8$, however, the influence on the space consumption is negligible when $n$ is large.
Given that hashing the keys is a bottleneck during construction, this reduces the number of keys that need to be hashed by a factor of close to $2$.
We only apply this optimization for large $n > 32$.

\subsection{Fast Splitting for RecSplit.}\label{s:splittings}
Besides brute-force construction of leaves, the
near-optimal space-efficiency of RecSplit
\cite{esposito2020recsplit} also hinges on
brute-force splitting which is optimal up to a
constant number of bits per split.  Concretely, a
binary split takes $\Oh(\sqrt{n})$ trials of 1-bit
hash functions (i.e.\ work $n^{3/2}$) and needs
$\log_2(n)/2+\Oh(1)$ bits to store the seed in
expectation.

We now outline how to achieve space
$\log_2(n)/2+\Oh(1)$ for a binary split with expected
running time $\Oh(n)$.
Assume $n$ is even to avoid trivial rounding issues.
The idea is to try splitting
hash functions $h_s$ with range $[cn]$ for some
constant $c$ where $cn$ is even. Now we look for a threshold value
$t$ such that $h_s(x)\leq t$ for exactly $n/2$ keys of $S$.  Such a threshold
exists with constant probability. If it exists, it
can be found in linear time using an appropriate
selection algorithm. Storing a seed leading to a successful split takes $\Oh(1)$ bits
and storing the threshold would need $\log_2 cn=\log_2 n + \Oh(1)$ bits.

We can reduce this to $\log_2(n)/2+\Oh(1)$ by
observing that for any constant $d$,
the threshold will lie in a range of size $d\sqrt{n}$ around $cn/2$
with constant probability
(the number of keys below $cn/2$ obeys a
binomial distribution with expectation $n/2$).
Thresholds in this range can be represented by just
storing the difference to $cn/2$ which takes
$\log_2(n/2)+\Oh(1)$ bits.  If the threshold is
outside this range, the seed failed and a new seed
is tried.  Overall, we have constant success
probability (exact split by threshold possible and
the threshold value is within the assumed range)
so that we still need $\Oh(1)$ bits for the seed and
$\log_2(n/2)+\Oh(1)$ bits overall.
Note that by appropriately defining the assumed threshold range,
we can use fixed-width binary encoding of the threshold so that
the query cost for the split will be similar to the brute-force case.

We leave a more detailed analysis for future work
that could compare the space overheads of threshold-based and brute-force splitting.
Note that for $c=1/n$ this method reverts
to the brute-force approach, i.e., it is to be
expected that threshold based splitting can closely approach the space
consumption for brute-force splitting.

\subsection{Faster Queries without RecSplit.}\label{ss:FastShockHash}

We can look for a different tradeoff between space
and query performance by avoiding the recursive
splitting used in RecSplit. We can replace it by
1-level splitting of each bucket, i.e., overall a
2-level hierarchy (that is also frequently used in
compressed data structures like rank-select).  In
the most simple case, one would simply split each
bucket randomly into leaves and store prefix sums
of bucket sizes ($\log_2 N$ bits) as well as local prefix
sums of leaf sizes (about $\log_2 b$ bits).
Storing these prefix sums takes about
$$\frac{N}{b}\log_2 N+\frac{N}{n}\log_2 b$$
bits. We can ensure that this is small compared to
the lower space bound of $N\log_2 e$, when $b\gg
\log N$ and $n\gg \log\log N$.

Implementing this simple approach directly is
likely to suffer from large construction time for
the largest leaves. There are various ways to
limit maximum leaf sizes. For example, we could
store thresholds generalizing the approach to
binary splitting in \cref{s:splittings}.  Or we
could bump some keys from buckets or leaves
similar to the approach used in BuRR and its
variants \cite{dillinger2022burr}.  We could also
store per-bucket seeds to select a bucket local
hash function that does not produce oversize
leaves. This approach could be aided by choosing
the number of leaves in a bucket proportional to
the bucket size.  Such measures cause additional space
overhead, but it does not change the basic analysis.
Also, correlations between the balancing
information and the prefix sum values open many
opportunities for compression. Overall, we get a
quite large design space whose exploration we
leave to future work.

\subsection{Pseudoforest Hash-and-Displace.}\label{ss:ShockDisplace}

We could combine the idea of ShockHash with the
idea of Hash and Displace outlined in
\cref{s:related}: Keys are hashed to buckets
for which we encode a seed specifying \emph{two}
hash functions. Construction searches for seeds
that result in a pseudoforest. This should have a
success probability that is much larger than
directly finding an injective function --
resulting in good space efficiency combined with
fast construction. Query time will lie in between
ShockHash-RS and basic
Hash-and-Displace.

\section{Filter strength}\label{s:FilterStrength}
In \cref{s:bitmaskFilter}, we describe a filter based on bit masks that enables to skip cuckoo hash table construction for most of the hash function seeds.
We then show that the probability for a seed to pass the filter is at most $≈0.864^n$ (see \cref{lem:filterProbability}).
In the following, we now give the exact solution that also takes into account the correlation between the bits.

\begin{lemma}
  Assume $2n$ balls are randomly thrown into $n$ bins. The probability that all bins receive at least one ball is $Θ(b^n)$ where
  $b = 2e^λ/(λe²)$ and where $λ$ is the unique solution to $2 = {λ}/(1-e^{-λ})$.
  Numerical approximation gives $λ ≈ 1.597$ and $b ≈ 0.836$.
\end{lemma}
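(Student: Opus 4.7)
\section*{Proof proposal for the filter‑strength lemma}

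The plan is to reduce the multinomial problem to a conditional Poisson computation and then apply a local central limit theorem. Let $X_1,\ldots,X_n$ be the occupancies when $2n$ balls are thrown into $n$ bins. We want $p_n := \Pr(\forall i: X_i \geq 1)$. I will Poissonize: introduce an auxiliary model in which each bin $i$ receives an \emph{independent} Poisson($\lambda$) number of balls, for a parameter $\lambda>0$ to be chosen; the total $N = \sum_i X_i$ is then Poisson$(n\lambda)$, and conditioning on $N = 2n$ recovers the multinomial distribution. Hence
\begin{equation*}
p_n \;=\; \Pr(\forall i:X_i \geq 1 \mid N = 2n) \;=\; \frac{(1-e^{-\lambda})^n \cdot \Pr(\sum_i Y_i = 2n)}{\Pr(N = 2n)},
\end{equation*}
where $Y_i$ denotes $X_i$ conditioned on $X_i \geq 1$ (an i.i.d.\ truncated Poisson with mean $\mu(\lambda) := \lambda/(1-e^{-\lambda})$).

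The crucial step is to choose $\lambda$ to equate the typical value of the sum $\sum Y_i$ with the target $2n$, i.e.\ $\mu(\lambda) = 2$. This is exactly the defining equation $2 = \lambda/(1-e^{-\lambda})$ in the lemma, and it admits a unique positive solution because $\mu$ is strictly increasing from $1$ to $\infty$ on $(0,\infty)$. With this choice, $\sum_i Y_i$ is a sum of $n$ i.i.d.\ integer random variables with mean $2n$ and finite positive variance, so a standard local CLT for lattice sums gives $\Pr(\sum_i Y_i = 2n) = \Theta(1/\sqrt{n})$.

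For the denominator, Stirling's formula applied to $\Pr(N = 2n) = e^{-n\lambda}(n\lambda)^{2n}/(2n)!$ yields
\begin{equation*}
\Pr(N = 2n) \;=\; \Theta(1/\sqrt{n}) \cdot \left(\frac{\lambda^2 e^{2-\lambda}}{4}\right)^n.
\end{equation*}
Dividing, the $\Theta(1/\sqrt{n})$ factors cancel and, using the identity $1-e^{-\lambda} = \lambda/2$ coming from the saddle‑point equation, we get
\begin{equation*}
p_n \;=\; \Theta(1)\cdot\left(\frac{4(1-e^{-\lambda})}{\lambda^2 e^{2-\lambda}}\right)^n \;=\; \Theta(1)\cdot\left(\frac{2e^{\lambda}}{\lambda e^{2}}\right)^n \;=\; \Theta(b^n),
\end{equation*}
which is the claim. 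Numerical evaluation gives $\lambda\approx 1.597$ and $b\approx 0.836$.

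The main obstacle is the local CLT step: one must justify that the lattice sum $\sum Y_i$ has the full integer lattice as its support (so the local CLT applies with step $1$ rather than some larger period) and that the error term is uniform enough to yield a genuine $\Theta(1/\sqrt{n})$ rather than only an upper or lower bound. Both are routine because the truncated Poisson assigns positive mass to $1$ and $2$, hence the lattice span is $1$, and because it has moments of all orders; the Gnedenko local CLT (or the Cramér large-deviation version applied at the mean) then delivers matching upper and lower constants. Everything else is bookkeeping with Stirling and the saddle‑point identity.
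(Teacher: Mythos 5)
Your proof is correct and takes essentially the same approach as the paper: Poissonization, passing to the truncated-Poisson distribution conditional on all bins nonempty, the saddle-point choice $\lambda/(1-e^{-\lambda}) = 2$ so that the truncated sum is centered at $2n$, and a local central limit theorem. The only cosmetic difference is that you Poissonize at rate $\lambda$ throughout, so your denominator $\Pr[N = 2n]$ is an off-center Poisson mass that you estimate by Stirling, whereas the paper Poissonizes at rate $2$ (its auxiliary variables $Y_i \sim \mathrm{Po}(2)$) so that both Poisson masses in the final ratio are centered at their mean $2n$ and hence $\Theta(1/\sqrt{n})$, isolating the factor $b^n$ algebraically.
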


\begin{proof}
  \def\x{\vec{x}}
  \def\X{\vec{X}}
  \def\Y{\vec{Y}}
  \def\Z{\vec{Z}}
  \def\Po{\mathrm{Po}}
  \def\Var{\mathrm{Var}}
  Consider the following random variables:
  \begin{itemize}
    • $X_i$ for $i ∈ [n]$ is the number of balls in bin $i$. The sequence $(X₁,…,Xₙ)$ follows a multinomial distribution.
    • $Y₁,…,Yₙ \sim \Po(2)$ are i.i.d.\ Poisson random variables.
    • $Z₁,…,Zₙ \sim D(λ)$ are i.i.d.\ random variables with distribution
      \[\Pr_{Z \sim D(λ)}[Z = i] =
        \begin{cases}
          0 & \text{ if $i = 0$,}\\
          \frac{e^{-λ}λ^i}{(1-e^{-λ})·i!} & \text{ if $i ∈ ℕ$}
        \end{cases}
      \]
      The idea is that $Z \sim D(λ)$ is like a Poisson random variable (which attains value $i$ with probability $e^{-λ}λ^i/i!$) but conditioned to be positive. We have $𝔼[Z] = λ/(1-e^{-λ})$ and choose $λ$ such that $𝔼[Z] = 2$.
    • $N_Y = \sum_{i = 1}^n Y_i$, $N_Z = \sum_{i = 1}^n Z_i$.
  \end{itemize}
  We use vector notation $\vec{·}$ as a shorthand for corresponding sequences of numbers, e.g. $\X = (X₁,…,Xₙ)$. We denote by $R$ the set of all outcomes of the balls-into-bins experiment and by $R₊$ the outcomes with at least one ball per bin, i.e.
  \begin{align*}
    R &= \{ \x ∈ ℕ₀^n \mid \sum_{i = 1}^n x_i = 2n\},\\
    R₊ &= \{ \x ∈ ℕ^n \mid \sum_{i = 1}^n x_i = 2n\}.
  \end{align*}
  Finally, the following abbreviations will be useful 
  \begin{align*}
    q(\x) = \prod_{i = 1}^n \frac{1}{x_i!},
    \ \ q(R) = \sum_{\x ∈ R} q(\x),
    \ \ q(R₊) = \sum_{\x ∈ R₊} q(\x).
  \end{align*}
  The random variables $\X$, $\Y$ and $\Z$ are closely related: The distributions of $\Y$ and $\X$ coincide when conditioning on $\{N_Y = 2n\}$ and the distributions of $\Y$ and $\Z$ coincide when conditioning on $\{Y ∈ R⁺\}$ and $\{N_Z = 2n\}$. Our argument will rest on understanding normalisation terms mediating between the three settings, which are numbers $C₁,C₂,C₃$ (that may depend on $n$ but not on $\x$) such that for all $\x ∈ R$ we have
  \begin{align}
    \Pr[\X = \x] &= C₁·q(\x)\label{eq:C1}\\
    \Pr[\Y = \x \mid N_Y = 2n] &= C₂·q(\x)\label{eq:C2}\\
    \Pr[\Z = \x \mid N_Z = 2n] &= C₃·q(\x)·𝟙_{\{\x ∈ R₊\}}\label{eq:C3}
  \end{align}
  Let us verify this claim and compute $C₁$, $C₂$ and $C₃$.
  \begin{align*}
    \Pr[\X &= \x] = \binom{2n}{x₁\ x₂\ …\ x_n} n^{2n}\\
    &= \frac{(2n)!}{x₁!·x₂!·…·xₙ!} n^{-2n} = q(\x) · \underbrace{\frac{(2n)!}{n^{2n}}}_{= C₁}\\
    \Pr[\Y &= \x \mid N_Y = 2n] = \frac{\Pr[\Y = \x]}{\Pr[N_Y = 2n]}\\
       &= \frac{\prod_{i = 1}^n \Pr[Y_i = x_i]}{\Pr[N_Y = 2n]}
       = \frac{\prod_{i = 1}^n e^{-2}·\frac{2^{x_i}}{x_i!}}{\Pr[N_Y = 2n]}\\
       &= \frac{e^{-2n}·2^{2n}\prod_{i = 1}^n \frac{1}{x_i!}}{\Pr[N_Y = 2n]}
       = q(\x)\underbrace{\frac{2^{2n}}{e^{2n}\Pr[N_Y = 2n]}}_{= C₂}
  \end{align*}
  In \cref{eq:C3} we get $0$ on both sides if $\x ∈ R \setminus R₊$. Consider now $\x ∈ R₊$.
  \begin{align*}
    \Pr[\Z &= \x \mid N_Z = 2n] = \frac{\Pr[\Z = \x]}{\Pr[N_Z = 2n]}\\
       &= \frac{\prod_{i = 1}^n \Pr[Z_i = x_i]}{\Pr[N_Z = 2n]}
       = \frac{\prod_{i = 1}^n \frac{e^{-λ}}{1-e^{-λ}}·\frac{λ^{x_i}}{x_i!}}{\Pr[N_Z = 2n]}\\
       &= q(\x) \underbrace{\frac{λ^{2n}}{e^{λn}(1-e^{-λ})^n \Pr[N_Z = 2n]}}_{= C₃}
  \end{align*}
  By summing \cref{eq:C1} and \cref{eq:C2} over all $\x ∈ R$ we obtain
  \begin{gather}
    1 = C₁·q(R) \text{ and } 1 = C₂·q(R)\notag\\
    \text{ and hence } C₁ = C₂.\label{eq:C1C2}
  \end{gather}
  By summing \cref{eq:C1} and \cref{eq:C3} over all $\x ∈ R₊$ we obtain
  \begin{gather}
    \Pr[\X ∈ R₊] = C₁·q(R₊) \text{ and } 1 = C₃·q(R₊)\notag\\
    \text{ and hence } \Pr[\X ∈ R₊] = C₁/C₃.\label{eq:C1C3}
  \end{gather}
  Putting \cref{eq:C1C2,eq:C1C3} together gives:
  \begin{align*}
    \Pr[\X &∈ R₊] = C₁/C₃ = C₂/C₃\\
    &= \frac{2^{2n}}{e^{2n}\Pr[N_Y = 2n]}\frac{e^{λn}(1-e^{-λ})^n \Pr[N_Z = 2n]}{λ^{2n}}\\
    & \Big(\frac{2^{2}e^{λ}(1-e^{-λ})}{e^{2}λ^{2}}\Big)^n \frac{\Pr[N_Z = 2n]}{\Pr[N_Y = 2n]}\\
    & \Big(\frac{2e^{λ}}{e^{2}λ}\Big)^n \frac{\Pr[N_Z = 2n]}{\Pr[N_Y = 2n]}
    = b^n \frac{\Pr[N_Z = 2n]}{\Pr[N_Y = 2n]}.
  \end{align*}
  The last step is to show that the two probabilities are of the same magnitude. Take $N_Y$ first. It is the sum of $n$ independent random variables with constant variance. The central limit theorem suggests that the histogram of $N_Y$ has a bell-curve-shape with mean $2n$ and standard deviation $Θ(\sqrt{n})$, though a formal proof has to exploit that the greatest common divisor of the support of the underlying distribution $\Po(2)$ is $1$.
  This implies that $\Pr[N_Y = 2n] = Θ(1/\sqrt{n})$. The same argument applies to $N_Z$ and the claim follows.
  \hfill
\end{proof}  

\end{document}